\definecolor{citecolor}{rgb}{0.8,0,0}
\definecolor{linkcolor}{rgb}{0,0,0.8}
\definecolor{urlcolor}{rgb}{0,0,0.8}
\def\th@plain{%
  \thm@notefont{}%
  \itshape %
}
\def\th@definition{%
  \thm@notefont{}%
  \normalfont %
}
\theoremstyle{definition}
\newtheorem{theorem}{Theorem}
\theoremstyle{theorem}
\newtheorem{proposition}[theorem]{Proposition}
\theoremstyle{definition}
\newtheorem{example}[theorem]{Example}
\theoremstyle{theorem}
\newtheorem*{keywords}{Keywords}
\newcommand{\code}[1]{\texttt{#1}}
\begin{document}

\title{Bringing Together Dynamic Geometry Software and the Graphics Processing Unit}

\author{Aaron Montag\thanks{Zentrum Mathematik (M10), Technische Universität München, 85747~Garching, Germany.\newline E-mail addresses: \url{montag@ma.tum.de} and \url{richter@ma.tum.de}. The authors were supported by the DFG Collaborative Research Center TRR 109, ``Discretization in Geometry and Dynamics''. This is a preprint.} \and Jürgen Richter-Gebert$^*$.}

\maketitle

\begin{abstract}
We equip dynamic geometry software (DGS) with a user-friendly method that enables massively parallel calculations on the graphics processing unit (GPU). This interplay of DGS and GPU opens up various applications in education and mathematical research. The GPU-aided discovery of mathematical properties, interactive visualizations of algebraic surfaces (raycasting), the mathematical deformation of images and footage in real-time, and computationally demanding numerical simulations of PDEs are examples from the long and versatile list of new domains that our approach makes accessible within a DGS. We ease the development of complex (mathematical) visualizations and provide a rapid-prototyping scheme for general-purpose computations (GPGPU).

The possibility to program both CPU and GPU with the use of only one high-level (scripting) programming language is a crucial aspect of our concept. We embed shader programming seamlessly within a high-level (scripting) programming environment. The aforementioned requires the symbolic process of the transcompilation of a high-level programming language into shader programming language for GPU and, in this article,  we address the challenge of the automatic translation of a high-level programming language to a shader language of the GPU.
To maintain platform independence and the possibility to use our technology on modern devices, we focus on a realization through WebGL.\end{abstract}

\begin{keywords}
dynamic geometry software, GPU, transcompilation, shader programming, script language, modern devices
\end{keywords}

\section{Introduction}\label{introduction}
Graphics processing units (GPUs) open up many new possibilities. However, extensive programming expertise and additional development effort are required to program them. In the first part of this article, we propose a concept to overcome this hurdle by seamlessly embedding shader programming language for the GPU within a high-level (scripting) programming language for rapid software prototyping. The concept includes the formal process of a symbolic transcompilation from a high-level language to a GPU shading language.

\cite{richter2010power} demonstrated that dynamic geometry software systems (DGS) equipped with a programming environment could be used advantageously for many mathematical scenarios. In the second part of this article, we will show how such an environment can be extended to utilize the GPU by the formal methods developed in the first part of the project. It provides the mathematician with a relatively easy-to-use tool for the following tasks, among many others: finding certain mathematical conjectures experimentally, the rapid prototyping of mathematical algorithms in parallel, building several visualizations of, for example, algebraic surfaces, fractals, fluid simulations and particle simulation within the framework of a classical DGS.
This method is available inside the browser and on modern devices such as tablets and mobile phones. Additionally, the field of mathematics education can greatly benefit from the seamless integration of GPU programming into DGS.

\subsection{Technical background}

Computations on modern devices can, in general, be executed either on the CPU (central processing unit) or GPU (graphics processing unit).
While CPUs were designed for sequential general-purpose computations, GPUs were initially invented to accelerate the graphics output of the computer. 
In order to obtain a high degree of flexibility in
designing the graphical operations,
programmable shader units were introduced within the GPU rendering pipeline.

Shaders can be considered as little programs which the GPU can execute massively in
parallel. According to \cite{owens2007survey}, a typical GPU rendering pipeline consists of a
vertex and fragment shader (among possibly other shaders).
Conventionally, the vertex shader performs operations on the vertices of
a three-dimensional mesh, while the fragment shader -- sometimes also
called as pixel shader -- computes the color for every single pixel. Soon,
it was discovered that these programmable shaders open the door for
general-purpose computations on the GPU (GPGPU), that were conventionally performed on the CPU. 
In addition to the generation of images, a shader can also be used whenever the execution of the same program at an independent set of data points is required. Often numerical simulations can be built on such a computational scheme.

With this approach, the gap between CPU and GPU, with regard to \emph{the technical capabilities}, gradually starts closing and a comprehensive set of traditional CPU-targeted tasks can be executed on the GPU. 

For many visualizations and scientific computations, the use of \emph{parallel architectures} (such as the GPU) is essential, because nowadays parallel architectures outperform the computational power of a \emph{single-threaded} CPU by several magnitudes. In the future, one might expect that these differences will become even more crucial because the scale of single-threaded computations is stagnating due to power density issues \citep{sutter2005free}, whereas the parallel architecture of GPUs can still benefit from exponential growth in its number of cores, \cite{nickolls2010gpu}. %

A more modern approach for GPGPU programming is to use NVIDIA CUDA C \citep{nvidia2017nvidia} or OpenCL C/C++ \citep{munshi2011opencl} instead of graphics shaders.
CUDA is a framework provided by NVIDIA that bypasses the use of standard shaders to enable general-purpose parallel computation on the GPU by directly accessing
the GPU. However, it is limited to NVIDIA hardware, and CUDA is not
available on browsers. The Khronos Group introduced OpenCL as an alternative that targets a broader range of hardware for GPGPU computation.

Both of the techniques provide programmable kernels that run close to the hardware for the computations. 
Nevertheless, CUDA and OpenCL are not available on every platform. In particular, these techniques are not available within browsers. The endeavors to make OpenCL available on the browser
through WebCL have declined: for instance, Mozilla (announced by \cite{bugzilla}) decided not to implement WebCL in favor of WebGL compute shaders.

This article will be limited to to \emph{shader}-based approaches, although many results could be easily transferred for target architectures such as CUDA or OpenGL. 
The motivation for our particular focus on the shader-based approach is \emph{WebGL} \citep{marrin2011webgl}, which can run shaders on recent web browsers without additional plugins and has extensive \emph{cross-platform compatibility}. WebGL, in particular, can be used for demonstration purposes within a webpage and it is suitable for many modern devices as a target platform \citep{evans20143d}.

\subsection{The gap in programming concepts between CPU and GPU }

Even tough tasks that were initially done on the CPU can now be
accelerated on the GPU, there is still a significant gap in \emph{programming concepts} between CPU and GPU.
This clear distinction may be desired if the aim is to create most-efficient software that exploits the maximum possible computational power tailored for the available hardware architecture.
The standard application programming interfaces (APIs) such as OpenGL, DirectX (and also OpenCL C/C++ and CUDA C), are made for programmers who want to make very conscious hardware decisions.
However, if a programmer wants to benefit from the advantages of the GPU within an abstract
scripting language (for instance for rapid software prototyping), then the
development efforts should also be minimized.
For scientific computing, the time-saving advantages of the GPUs are often not used because 
the gain in performance time often does not justify the additional development effort and time \citep{pycuda}.

In our opinion, there are several difficulties in integrating shader programs (and also kernels for CUDA or OpenCL) within other programs that keep programmers from using the GPU:

\begin{description}
\item[Separation.]
  The shaders are separated from the main program. In many APIs, a shader program has to be provided through an external file or a string containing the shader source, and thereon the source is compiled with a special compiler for the GPU. This separation in code increases the complexity of a program.
\item[Lines of code.]
  A lot of technical boilerplate code  (i.e. code that is duplicated verbatim for various applications with only slight modifications) has to be written to create even simple applications.
\item[Own language.]
  Shaders are written in a programming language that has been designed for graphics purposes and will be compiled for the shaders for the GPU. Even to map other non-graphical computations to the GPU, the programmer has to become acquainted with this programming language.
  Furthermore, there is a semantic discrepancy when shaders that are written in a compiled language are used within a scripting language. A scripting language is, in contrast, designed to be interpreted at running time. 
\item[Platform-dependence.]
  The use of shaders poses a limitation in the number of platforms on which the program will run. This often demands additional programming to make the software run on a sufficiently large set of different hardware.
\end{description}

Standardized APIs like OpenGL have overcome the last limitation. With OpenGL, shader code can be compiled for GPUs of different
vendors. Often, these shaders are compiled at run-time on a target machine. WebGL \citep{marrin2011webgl} utilizes this mechanism to use OpenGL within different browsers.

To our knowledge, for most scripting languages, there are only APIs
available that require the programmer to write the shader programs
in a particular low-level language that has been designed for the GPU.
(For example, if OpenGL is accessed within JavaScript through WebGL, the programmer still has to provide shader codes written in the OpenGL
Shading Language (\emph{GLSL})).

\subsection{Our objectives for a high-level language with GPU support}
The project aims to develop a concept that makes it easier to embed GPU shaders in another high-level programming language (referred to as \emph{host language} in the following). Our main aims are:

\begin{description}
\item[Smooth integration.] The code for the GPU shaders should be smoothly integrated within the
host language with only a minimum of additional language
  constructs.
  Ideally, the programmer should not need to indicate (or even be aware of) whether he or she wants to use the graphics card. In particular, the programmer, who does not program close to the shader, does not have to indicate the splitting point between CPU and GPU code. With this approach,
  a large class of user-defined functions should be usable both on CPU and GPU, and the amount of written code would be drastically reduced.
\item[Efficiency.]
  At the same time, the performance at running time should remain comparable with applications that traditionally use graphics shaders.
\item[Portability \& Stability.]
  Ideally, the programs should run on many platforms. For a visualization framework with a broad outreach, we propose using WebGL.
\item[Versatile applications.]
   Our concept is intended to provide a tool for both real-time rendering of images and basic GPGPU programming.
\item[Compilation of untyped language.] The scripting language will be transcompiled to a shader language at running time. (The shader language can then be compiled to the vendor-specific binary code for the CPU.) 
   As the host language is assumed to be a dynamically
  typed scripting language, the programmer should not decide upon
  GPU-specific types to be chosen. In our approach, the types are
  \emph{inferred automatically}. The types should be as weak as possible in order to make the program run fast. In fact, this requirement poses strong formal constraints on the software architecture of the transpiler. Major parts of this article are dedicated to this issue.

\item[Mathematically-oriented user.] The language should fit the requirements of a mathe\-mat\-ically-oriented user. For example, variables for numbers should be able to store complex-valued numbers, if needed. Vectors of arbitrary length, basic linear algebra operations for matrices and vectors should be part of the language as well.
\end{description}

In Section \ref{related-work}, we present an overview of related work, then in
Section \ref{concept}, we introduce the primary process of the transcompilation. In
Section \ref{example-implementation-cindygl}, we provide a brief outline of our sample implementation \emph{CindyGL} and a set of examples. \emph{CindyGL} is a
JavaScript-based implementation of the concepts introduced in this paper. \emph{CindyGL} can
transcompile the scripting language \emph{CindyScript} to \emph{GLSL} and thus can be
used to create widely portable visualizations and GPGPU programs that
can be run within the browser.

\subsection{Related work}\label{related-work}

In dynamic geometry software, often an expression is evaluated for many data points. An aim addressed by approach is the challenge to render images, where each pixel on the screen gets assigned a color by a given function. \cite{liste2014color} suggests a technique to render curves in Geogebra via a so-called ``sweeping-line'' approach on the CPU, which turns out to be very slow even on recent hardware. Also the dynamic geometry software Cinderella 2 includes a \mintinline{CindyScript}{colorplot}-command to produce similar plots \citep{richter2012cinderella}. However, all these approaches are very slow due to computations on the CPU, and complex images cannot be rendered in real-time. To our best knowledge, no dynamic geometry software can utilize the GPU for such tasks that would require a seamless embedding of GPU programming within a high-level environment.

However, some (non-DGS) projects aim for seamless integration of CUDA or OpenCL within another scripting programming language.
\emph{Copperhead} \citep{copperhead} is probably the most remarkable related project. Copperhead can translate a subset of the scripting
language Python to CUDA at running time. The types are modeled in a minimal Hindley-Milner type system and inferred for the GPU.
Hence Copperhead enables GPGPU
programming within a high-level host language, without requiring the programmer to provide separate code in a low-level language designed for the GPU.

Another approach in integrating GPU accelerated code in another programming language is the introduction of a new datatype for (large) arrays that is permanently stored on the GPU. For example, \emph{Accelerator for C\#} \citep{tarditi2006accelerator} implements a new \texttt{parallel\ array} datatype which one can use for element-wise operations, reductions, affine transformations and linear algebra on the GPU. In a similar spirit, in MATLAB, \texttt{gpuArray} objects were introduced with release R2010b. The \texttt{gpuArray} objects represent matrices that are stored on the GPU, and several operators are overloaded for objects of the \texttt{gpuArray} class.
For a more extensive description on GPU programming in MATLAB, we refer to \cite{reese2012gpu}. Furthermore, a subset of MATLAB code can be executed on the GPU through the \texttt{arrayfun} function if it is applied to a \texttt{gpuArray}. However, the function within \texttt{arrayfun} cannot access variables from the workspace. Matlab utilizes CUDA for its GPU computations. In addition, \cite{pycuda} introduce a class named \texttt{gpuarray} in \emph{PyCUDA} that supports abstractions for many component-wise operations. More complicated functions require the programming of kernels in CUDA C.

Nevertheless, a substantial difference compared to our project is that CUDA is used to access the GPU. In contrast, we use WebGL and focus more on the mathematically-oriented user. Because of our choice of WebGL, running GPGPU code might be slower than approaches that are based on the CUDA architecture, which has been designed for the GPGPU purpose, but we win  in terms of compatibility on every machine that provides a browser with WebGL support and can use a pipeline that has been designed for real-time visualizations.

\section{Concept}\label{concept}
Before going into the details of the symbolic process of type inference, we will describe the general setup that clarifies the roles between author, scripting language, interpreter/compiler, GPU and CPU usage.

We recommend generating the GPU code at run-time, as has been introduced by \cite{pycuda}. The generation of GPU code at run-time increases the portability of the software, and the scheme goes hand in hand with the design idea of a typical scripting language. Furthermore, this provides the possibility for the self-adaption of a program according to the given input data.

 The aim is to enable efficient GPU computations without posing difficulties in the development. Therefore most of the preparation for running binary code on the GPU will be done on the machine of the user. Since the results of the GPU compilation will be cached, the penalties in running time will mostly only occur during the first execution of code that is suitable for parallelization.

We propose the following concept, which is illustrated in Fig.\,\ref{fig: scheme-flow-chart}:
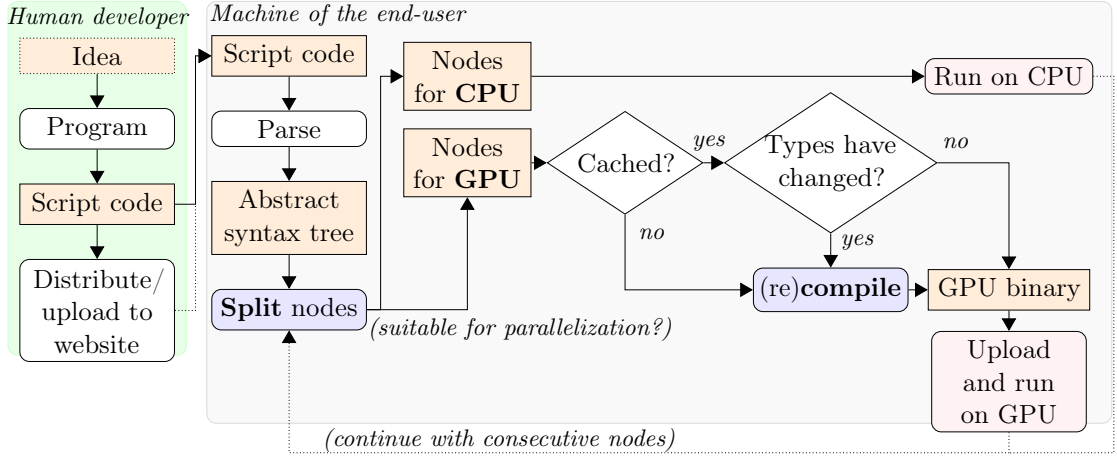
\begin{figure}
  \begin{center}
  \begin{adjustbox}{max width=\textwidth}
\begin{tikzpicture}[%
    >=triangle 60,              %
    start chain=going below,    %
    node distance=5mm and 28mm, %
    every join/.style={norm},   %
    ]
\tikzset{
  base/.style={draw, on chain, on grid, align=center},
  thing/.style={base, rectangle, text width=20mm, fill=orange!15},
  test/.style={base, diamond, aspect=1.5, inner sep=-3pt, text width=24mm, fill=white},
  action/.style={thing, rounded corners, fill=white},
  run/.style={action, rounded corners, fill=red!5},
  norm/.style={->, draw},
  free/.style={->, draw, lcfree},
  cong/.style={->, draw, lccong},
  it/.style={font={\small\itshape}}
}

 \draw[green!20!white, fill=green!10!white, rounded corners] (-1.3,-4.4) rectangle (1.3,.8);
 \draw[gray!30!white, fill=gray!5!white, rounded corners] (1.6,-5.4) rectangle (14.8,.8);
\node [thing, densely dotted] (idea) {Idea};
\node [above=0mm of idea, it] {Human developer};
\node [action, join] {Program};
\node [thing, join] (sc1) {Script code};
\node [action, join] (dist) {Distribute/ upload to website};

\node [thing, right=of idea] (sc) {Script code};

\draw[->]  (sc1.east) -- ++(3mm,0) |- (sc);
\draw[->, densely dotted]  (dist.east) -- ++(3mm,0) |- (sc);

\node [above=0mm of sc, xshift=0.7cm, it] {Machine of the end-user};
\node [action, join] (parser) {Parse};
\node [thing, join] (ast) {Abstract syntax tree};
\node [action, join, fill=blue!10] (split) {\textbf{Split} nodes};

\node [thing, right=of parser, xshift=-.2cm, yshift=-.5cm, text width=16mm] (cgpu) {Nodes for \textbf{GPU}};
\node [test, right=of cgpu, join, xshift=-.5cm] (cached) {Cached?};
\node [test, right=of cached,  xshift=0.2cm] (typechange) {Types have changed?};
\node [action, fill=blue!10] (recomp) {(re)\textbf{compile}};
\node [thing,right=of recomp, join, xshift=-.2cm, text width=21mm] (gpubin) {GPU binary};
\node [run,join, yshift=.2cm] (rungpu) {Upload and run on GPU};

\node [thing, right=of sc, xshift=-.2cm, yshift=-.3cm,  text width=16mm] (ccpu) {Nodes for \textbf{CPU}};

\node [on grid, draw, rectangle, rounded corners, fill=red!5] at (ccpu -| rungpu) (runcpu) {Run on CPU};
\draw[->] (ccpu) -- (runcpu);

\draw[->] (split.east) -- ++(2mm,0) |- (ccpu);

\draw[->]  (split) -| node[black, xshift=2em, yshift=-0.75em, it] {(suitable for parallelization?)}(cgpu);

\draw[->] (cached) |- node [black, very near start, xshift=1em, it] {no} (recomp);
\draw[->] (cached) -- node [black, near start, yshift=0.75em, it] {yes} (typechange);
\draw[->] (typechange) -- node [black, near start, xshift=1em, it] {yes} (recomp);
\draw[->] (typechange) -| node [black, very near start, yshift=0.75em, it] {no} (gpubin);

\draw[->, densely dotted]  (rungpu.south) -- ++(0,-3mm) -| node[black,it, yshift=.5em, xshift=8em] {(continue with consecutive nodes)} (split);
\draw[-, densely dotted]  (runcpu.east) -- ++(3mm,0) |- ($(rungpu.south)+(0,-3mm)$);
\end{tikzpicture}
\end{adjustbox}

    \caption{Our proposed scheme, which extends the concept from \cite{pycuda}, for a seamless embedding of shader code within the environment of a scripting language. For simplicity, data transfers between GPU and CPU are omitted in this figure.}
    \label{fig: scheme-flow-chart}
  \end{center}
\end{figure}
The author of the content will directly program his or her idea
through a single scripting language, regardless whether he or she is aiming for a GPU or CPU computation.
When the end user executes the scripting code, the interpreter will parse the scripting code by standard techniques as described by \cite{levine2009flex} to obtain an abstract syntax tree (referred to as AST in the following). Instead of instantaneously executing the entire AST, the interpreter computes a split of the AST such that those parts that are suitable for parallelization are separated from those parts that are more appropriate for execution on the CPU. On its
first execution, the interpreter compiles the code suitable
for the GPU to a shading language of the GPU and then uses the host's GPU compiler to compile this shading code to GPU binary. For the subsequent calls
of the same code, the compilation is skipped. Only if the types of the
contained variables have changed because of different parameters
(e.g. if a real-valued variable becomes complex valued) a recompilation is
enforced. The GPU binary is uploaded to the GPU and executed on the GPU.

More details about the splitting, type detection, and transcompilation follow in Sections \ref{detection-of-parts-for-parallelization}, \ref{type-detection} and \ref{transcompilation}, respectively.
The lazy synchronization of data between GPU and CPU will be addressed in Section \ref{lazysec}.

\subsection{Detection of parts for
parallelization and splitting the code}\label{detection-of-parts-for-parallelization}

We assume that the host scripting language has a set of specific operations that are suitable for parallel computations on the GPU without any modifications (\cite{herlihy2011art} coined the term ``embarrassingly parallel'' expressions). These can be instructions that belong to the \emph{single instruction, multiple data} (SIMD) class, i.e., very similar instructions that are to be executed on a large set of data points.
A typical operation that corresponds to this scheme is the \code{map} operator of a functional language (if it is applied to a function without side effects and a sufficiently large array). The \code{map} operator applies a given function to each element of an array and returns the results as an array of the same dimensions. Another operation, which was implemented in our sample realization \emph{CindyGL}, is the semantically similar \mintinline{CindyScript}{colorplot} command. The command generates a texture, i.e. a large 2-dimensional array of pixels, where the color of each pixel is computed by a given function taking the pixel coordinate as an input.

Within this function, user-defined functions should also be callable. However, it can not be expected that a transcompilation can work for all functions. For instance, the \emph{GLSL} shading language 1.0 \citep{simpson2009opengl} does not allow any recursion.

An alternative language construct that is also suitable for our concept is the introduction of a special array type that is permanently stored on the GPU. This is familiar from the parallel array in Accelerator for C\# \citep{tarditi2006accelerator}, \texttt{gpuarray} in PyCUDA \citep{pycuda}, and \texttt{gpuArray} in Matlab \citep{reese2012gpu}. The point-wise operations on this array are suitable for parallelization.

All these constructs have at least one \emph{running variable} that takes a different value for each call of the function. Within \code{map}, a function is applied to this variable, or the pixel coordinate within \mintinline{CindyScript}{colorplot} becomes the running variable. Those terms that are invariant on the running variables should be computed only once instead of being massively parallelized. To determine them, we build a directed \emph{dependency graph} $G = (V, E)$ that
contains all variables and all expressions that appear within the function as nodes. Without loss of generality, we assume that there is only one running variable. Otherwise, we can assume the running variables are stored together in a single array. Let $v_0$ denote the running variable.
There is a directed edge $(a,b)\in E$ iff the variable/term $a$ depends on $b$. This might be because $a$ is a variable and is assigned to $b$ or $a$ is an expression that that contains $b$ and hence its value is dependent on $b$ (i.e., $b$ is a child of $a$ in the AST).
Furthermore, all variables that are modified within a conditional loop are dependent on the condition of the loop (e.g.,~the
boolean variable within an if-clause, or the number of repetitions of a
loop). %

Now a set $D \subset V$ of terms that depend on the running variable is computed: $D \subset V$ contains precisely those nodes that can be reached in $G$ from $v_0 \in V$ and can be determined by a depth-first search in $G$.
The node $r\in V$ corresponding to the result term either is contained in $D$ or is not contained in $D$. If $r\not\in D$, then the computation is independent on the running variable and thus always takes the same value and can be computed once on the CPU. If $r\in D$, all terms in $D$  will be marked for a transcompilation to the GPU.

\newcommand{\E}{\mathcal{E}}
All nodes in
$v \in V \setminus D$ that have an immediate successor in $D$ (i.e., there is an
$a\in D$ such that $(v,a)\in E$) will be marked as uniform variables and form a set $U$.

Let $\E=D \cup U$ denote the set of relevant expressions.
The following transformations will be only applied to $\E$. The values of the terms in $U$ will be computed once on the CPU and used as input parameter for the GPU program.

This enables us to obtain an almost optimal split of code in CPU and GPU computations.

\begin{example}
Consider the expression \mintinline{CindyScript}{1/2+1/2*sin(|#|-seconds())} \footnote{This is an expression to generate the colorplot of a centered sinusoidal wave, see  Example \ref{waveexample}.}. Let \mintinline{CindyScript}{#} be the running variable $v_0$. The expression generates a dependency graph $G$ shown in Fig. \ref{fig:tree}. The three terms \mintinline{CindyScript}{seconds()}, and twice \mintinline{CindyScript}{1/2} are the uniform expressions $U$, i.e. they are independent from $v_0 = \mintinline{CindyScript}{#}$ and are computed once on the CPU because they attain their value independently from $v_0$.
\end{example}

\begin{figure}
  \begin{center}
    \begin{adjustbox}{max width=\textwidth}
    \tikzset{
  base/.style={draw, on chain, on grid, align=center},
  thing/.style={base, rectangle, text width=20mm, fill=orange!15},
  test/.style={base, diamond, aspect=1.5, inner sep=-3pt, text width=24mm, fill=white},
  action/.style={thing, rounded corners, fill=white},
  run/.style={action, rounded corners, fill=red!5},
  treenode/.style = {shape=rectangle,
                     draw, align=center},
  root/.style     = {treenode, font=\Large, bottom color=red!30},
  D/.style     = {fill=orange!15},
  U/.style     = {fill=blue!10},
  env/.style      = {treenode, font=\ttfamily\normalsize},
  dummy/.style    = {circle,draw}
}
\begin{tikzpicture}
  [
   >=triangle 60,     
    grow                    = right,
    sibling distance        = 4em,
    level distance          = 8em,
    edge from parent/.append style={<-},
    sloped
  ]
  \node [D] {\code{1/2+1/2*sin(|\#|-seconds())}}
    child { node [U] {\code{1/2}}
        child { node [xshift=-1.5cm, above] {\code{2}}}
        child { node [xshift=-1.5cm,below] {\code{1}}}
    }
    child {
      node [D] {\code{1/2*sin(|\#|-seconds())}}
      child { node [U] [above] {\code{1/2}}
        child { node [xshift=-1.5cm,above] {\code{2}}}
        child { node [xshift=-1.5cm,below] {\code{1}}}
      }
      child { node [D] {\code{sin(|\#|-seconds())}}
        child { node [D, xshift=0.6cm] {\code{|\#|-seconds()}}
          child { node [U] {\code{seconds()}} }
          child { node [D, xshift=-1cm] {\code{|\#|}}
            child { node [D, xshift=-1.5cm] {\code{\#}} node[xshift=-0.9cm]{$= v_0$}}
        }
        }
      }
    };
\end{tikzpicture}
    \end{adjustbox}
    \caption{Dependency graph $G$ for expression \code{1/2+1/2*sin(|\#|-seconds())}. Nodes in $D$ are highlighted in orange, nodes in $U$ are highlighted in blue.}
    \label{fig:tree}
  \end{center}
\end{figure}
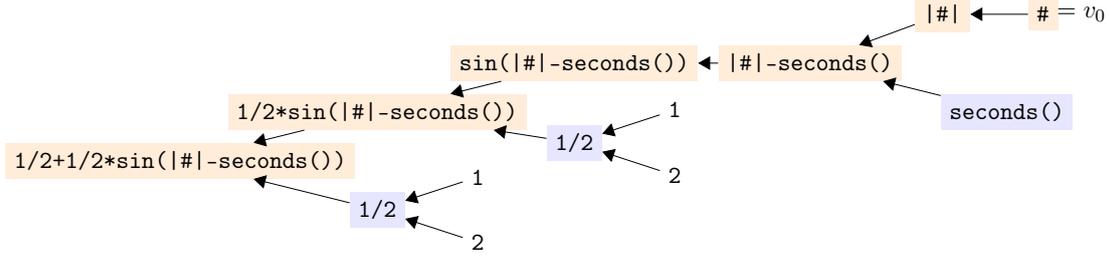

\subsection{Type detection}\label{type-detection}

A scripting language usually does not have a static type system. A variable can store any object; a function can return various values of different types. This noncommittal typing makes the fast-prototyping programming process easy, but the programs will be less efficient due to run-time type checking. However, for the transcompilation to the GPU variables, terms, arguments and return values of functions should take a static type, since the shader languages, trimmed for efficiency, only support static types.

We impose the following three requirements on the type system:
\begin{description}
  \item[No type annotations in the host language.] The programmer should not have to give type annotations. Omitting the type annotations is essential so that the GPU codes fit seamlessly into the scripting code of the host language. Possible up-casting of types should be done automatically.
  \item[No polymorphic types.] Because of the restrictions of the shading languages, which should run as efficiently as possible, only monotypes should be allowed. That means that a general type signature should not be inferred for any user-defined function. Instead, whenever a user-defined function is used, a single signature of monotypes is determined. Functions will be monomorphized.
  \item[Minimal typing.]  The types should be as weak as possible in order to make the program run fast. For example, if the transcompiler can prove that a variable remains to be an integer, then the integer data type on the GPU should be used instead of a floating point number or a complex number.

\end{description}

Thus, the transcompiler has to automatically assign static types to the variables and terms of a dynamically typed language. As is the case for Copperhead, this requires well-typed programs. A counter-example of a well-typed program is
\code{if(booleanexp, 12, [0])}, which returns \code{12} if \code{booleanexp} is \code{True} and \code{[0]} otherwise. This program will cause an error during the transcompilation since the types of the if and else branch cannot be unified to a single type.

We define the set of all types $T$ recursively:
\begin{itemize}
  \item $\bot$, $\top$, $\texttt{boolean}$, $\texttt{int}$, $\texttt{real}$, $\texttt{complex} \in T$
  \item $\forall \tau \in T \setminus \{\bot, \top\}, \forall n \in \mathbb{N}: list(n, \tau) \in T$
\end{itemize}
$\bot$ will correspond to the unset type and $\top$ to the error type, which will be used if no type could be determined. All other types can be modeled in a shader language. For instance, the type $list(4, list(4, \texttt{real}))$ corresponds to a $4\times 4$ matrix and can be modeled in \emph{GLSL} through the \texttt{mat4} type. Matrices of higher dimensions can be modeled with corresponding \texttt{structs} in \emph{GLSL}.
For dynamic geometry software, it is also suitable to add types such as $\texttt{point}$, $\texttt{line}$, and $\texttt{circle}$.

Furthermore, we equip $T$ with a reflexive, antisymmetric, and transitive \emph{subtype} relation $\sqsubseteq$ to make $(T, \sqsubseteq)$ a lattice, which is recursively generated as follows:
\begin{itemize}
  \item $\forall \tau \in T: \bot \sqsubseteq \tau \sqsubseteq \top$
  \item $\texttt{boolean} \sqsubseteq \texttt{int} \sqsubseteq \texttt{real} \sqsubseteq \texttt{complex}$
  \item $\forall \sigma,\tau \in T \setminus \{\bot, \top\}, \forall n \in \mathbb{N}: \sigma \sqsubseteq \tau \Rightarrow list(n, \sigma) \sqsubseteq list(n, \tau)$
\end{itemize}

A subtype relation $\sigma \sqsubseteq \tau$ means that every value that can be represented by type $\sigma$ can also be represented by a type $\tau$. Furthermore, each relation $\sigma \sqsubseteq \tau$ implies that there is an injective inclusion function from the set of values in $\sigma$ to the set of values in $\tau$. With $\sigma \sqcup \tau$ we denote the least-upper-bound of $\sigma$ and $\tau$. 

\begin{example} Typical values for $\sqcup: T \times T \to T$ are:%
\begin{eqnarray*}
\texttt{int} \sqcup \texttt{real} &=& \texttt{real} \\
\bot \sqcup \texttt{complex} &=& \texttt{complex} \\
list(5,\texttt{complex}) \sqcup list(5,\texttt{real}) &=& list(5,\texttt{complex}) \\
list(2,\texttt{real}) \sqcup list(3,\texttt{real}) &=& \top
\end{eqnarray*}
\end{example}

\begin{proposition}
$\sqcup: T \times T \to T$ is computable.
\end{proposition}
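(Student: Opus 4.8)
The plan is to exhibit an explicit recursive procedure that computes $\sigma \sqcup \tau$ and to argue that it terminates and returns the least upper bound. The structural fact I would exploit is that every element of $T$ is, by the recursive definition, a \emph{finite} term: either one of the six atoms $\bot, \top, \texttt{boolean}, \texttt{int}, \texttt{real}, \texttt{complex}$, or a finite nesting $list(n_1, list(n_2, \dots list(n_k, a)\dots))$ ending in an atom $a \notin \{\bot,\top\}$. Hence both inputs can be inspected by their outermost constructor, and the join can be defined by case analysis with at most one recursive call, made on strictly smaller terms.

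Concretely, I would define $\mathrm{join}(\sigma,\tau)$ as follows. If $\sigma = \top$ or $\tau = \top$, return $\top$; if $\sigma = \bot$ return $\tau$, and symmetrically if $\tau = \bot$. If both are scalar atoms, return the larger of the two in the chain $\texttt{boolean} \sqsubseteq \texttt{int} \sqsubseteq \texttt{real} \sqsubseteq \texttt{complex}$, which is a constant-time lookup. If $\sigma = list(n,\sigma')$ and $\tau = list(n,\tau')$ have the \emph{same} length $n$, recurse and return $list(n, \mathrm{join}(\sigma',\tau'))$, unless that recursive call yields $\top$, in which case return $\top$. In every remaining case -- a list against a scalar, or two lists of different lengths -- return $\top$.

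Termination is immediate: the only recursive call replaces the pair $(list(n,\sigma'), list(n,\tau'))$ by $(\sigma',\tau')$, which strictly decreases the sum of the term depths, a well-founded measure on $T \times T$; every other branch returns without recursing. Thus $\mathrm{join}$ is a total computable function. It then remains to verify that $\mathrm{join}(\sigma,\tau) = \sigma \sqcup \tau$, which I would prove by induction on the same measure, checking the two defining properties of a least upper bound: that $\mathrm{join}(\sigma,\tau)$ is an upper bound of both arguments, and that it lies below any common upper bound $\rho$. Both checks reduce, via the three generating clauses of $\sqsubseteq$, to the inductive hypothesis in the equal-length list case and to the finite scalar chain in the atomic case.

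The one step deserving genuine care -- and the main obstacle -- is justifying the final ``otherwise return $\top$'' branch, i.e. that two list types of different lengths, and a list type versus a scalar atom, really have no common upper bound other than $\top$. Here I would lean on the rigid shape of $\sqsubseteq$: the only clause producing $list(n,\sigma') \sqsubseteq \rho$ with $\rho \neq \top$ is the third generating clause, which forces $\rho = list(n,\rho')$ of the \emph{same} length $n$. Consequently any non-$\top$ upper bound of a length-$n$ list is again a length-$n$ list, so distinct lengths -- or a list paired with a scalar -- admit $\top$ as their only common upper bound, making $\top$ the least one. Establishing this ``length is rigid'' lemma about $\sqsubseteq$, while confirming along the way that $\sqsubseteq$ is itself decidable by an analogous structural recursion, completes the argument.
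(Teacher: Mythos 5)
Your proposal is correct and follows essentially the same structural recursion as the paper's proof: strip $\bot$/$\top$, take the maximum along the scalar chain for atoms, recurse componentwise on equal-length lists, and return $\top$ in all remaining cases. In fact, your version is slightly more careful than the paper's in two respects: you propagate $\top$ out of the recursive list case, whereas the paper's formula $list(n,\alpha\sqcup\beta)$ can produce the ill-formed term $list(n,\top)\notin T$ when the inner join is $\top$, and you additionally verify (via the ``length is rigid'' observation about $\sqsubseteq$) that the procedure returns the \emph{least} upper bound, a step the paper leaves implicit.
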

\begin{proof}
$\sigma \sqcup \tau$ for $\sigma, \tau \in T$ can be computed  recursively.
We can handle the cases $\{\sigma, \tau\} \cap \{\bot, \top\} \neq \emptyset$ via $\sigma \sqcup \bot = \sigma$, $\sigma \sqcup \top = \top$ (and the symmetric equations). If $\{\sigma, \tau\} \subset \{ \texttt{boolean}, \texttt{int}, \texttt{real}, \texttt{complex}\}$ then  $\sigma \sqcup \tau$ is the maximum of $\sigma$ and $\tau$ in the sequence $\texttt{boolean} \sqsubseteq \texttt{int} \sqsubseteq \texttt{real} \sqsubseteq \texttt{complex}$.
In all other cases, $\sigma$ or $\tau$ is a list. Wlog. let $\sigma = list(n, \alpha)$ for some type $\alpha\in T$. The term $list(n, \alpha) \sqcup \tau$ can be recursively evaluated as follows

\[
list(n, \alpha) \sqcup \tau = \begin{cases}
list(n, \alpha \sqcup \beta) & \text{ if $\tau = list(n, \beta)$}\\
\top & \text{ otherwise}
\end{cases}
\]
The recursion terminates because $\alpha$ and $\beta$ contain fewer list terms than $\sigma$ and $\tau$.

\end{proof}

   A variable that takes values of both type $\sigma$ and $\tau$ should be of type $\mu := \sigma \sqcup \tau$. Then $\sigma, \tau \sqsubseteq \mu$ and $\mu$ is minimal in this property.

Our primary aim is to determine the type of all variables and terms in $\E$. That is challenging because the program poses several, possibly circular, dependencies between the types of terms. First, we will model collections of types through product lattices. Then we will elaborate on how to model the dependencies between the different variables and terms as a mathematical condition on the product lattice. In the last step, we will show how a tailored fixed-point algorithm can be used to compute a minimal typing that fulfills the required conditions.

For any $n\in\mathbb{N}$, $T^n$ becomes a lattice by defining the product order
$(\sigma_1, \dots, \sigma_n) \sqsubseteq (\tau_1, \dots, \tau_n)$ iff $\sigma_i \sqsubseteq \tau_i$ for every $i \in \{1,\dots,n\}$. It can be easily seen that the height of the lattice $T$, and therefore the height of the lattice $T^n$, is finite. In particular they fulfill the \emph{ascending chain condition} (ACC), i.e. every increasing chain eventually becomes stationary.

\newcommand{\minsignature}{\textsc{minSign}}
A function $\textit{fun}$ of arity $n$ (i.e. $\textit{fun}$ takes $n$ arguments) can have multiple \emph{signatures}. In the transcompilation of $\textit{fun}$, it is suitable to choose the implementation of $\textit{fun}$ that has a signature as weak as possible, but as strong as necessary.
Therefore, for every $n$-adic function $\textit{fun}$, we equip the transcompiler with a function \[
\minsignature_{\textit{fun}}:  T^n \to T^n \times T
\] that, applied to the types of the provided arguments, returns a signature, which we consider as a tuple of argument types and a return type. If 
\[
\minsignature_{\textit{fun}}(\tau_1, \dots, \tau_n) = ( (\alpha_1, \dots, \alpha_n), \beta)
\]
we demand $(\tau_1, \dots, \tau_n) \sqsubseteq (\alpha_1, \dots, \alpha_n)$ and $(\alpha_1, \dots, \alpha_n) \to \beta$ is the \emph{minimal} signature of an available applicable GPU implementation of $\textit{fun}$ for arguments of type $(\tau_1, \dots, \tau_n)$, i.e. there is an implementation of $\textit{fun}$ that takes arguments of types $(\alpha_1, \dots, \alpha_n)$ and returns a value of type $\beta$. By \emph{minimal}, we mean that there is no other implementation with a signature $ (\alpha_1^*, \dots, \alpha_n^*) \to \beta^*$ such that $(\tau_1, \dots, \tau_n) \sqsubseteq (\alpha_1^*, \dots, \alpha_n^*) \sqsubset (\alpha_1, \dots, \alpha_n)$.
The map $\tau \mapsto \minsignature_{\textit{fun}}(\tau)_2$ is monotone.

\begin{example}\label{minsignatureexample}
Typical values for $\minsignature$ for the addition function $+$ and the square root $\sqrt{\cdot}$ can be 
\begin{eqnarray*}
\minsignature_{+}(\texttt{int}, \texttt{int}) &=& ((\texttt{int}, \texttt{int}), \texttt{int})\\
\minsignature_{+}(\texttt{complex}, \texttt{int}) &=& ((\texttt{complex}, \texttt{complex}), \texttt{complex})  \\
\minsignature_{\sqrt{\cdot}}(\texttt{int}) &=& ((\texttt{real}), \texttt{complex})
\end{eqnarray*}
For the first two values we assume that GPU implementations for $+$ for the data types \texttt{int}, \texttt{real}, and \texttt{complex} exist. This is mathematically motivated by the different domains for the functions $+ : \mathbb{Z}\times \mathbb{Z}\to\mathbb{Z}$, $+ : \mathbb{C}\times \mathbb{C}\to\mathbb{C}$ and $\sqrt{\cdot} : \mathbb{R}\to\mathbb{C}$. Whenever those functions are applied to values of a domain that can be embedded within another domain (i.e. $\mathbb{Z} \hookrightarrow \mathbb{R}$), then the generalized function on a wider domain can be used. This means that the evaluation of $+$ on two different arguments, the implementation for the lowest type that is a super-type of both its arguments should be chosen. If one of the two arguments is a real subtype of the other, it will automatically be upcasted before the evaluation.

The third example describes the (complex) square root. It is reasonable to provide an implementation that computes the complex square root of real numbers. If one wants to calculate the square root of an \texttt{int}, then it is suitable to choose the same function. It would also be possible to introduce additional datatypes for positive numbers, which can be used to ensure that the square root remains real.
\end{example}
Now, our aim is to compute a \emph{minimal} type assignment $\Gamma \in T^{\E}$ that assigns every relevant expression in $\E$ ($\E$ is  defined in Section \ref{detection-of-parts-for-parallelization} as the union of the dependent term/variables and the uniform variables) to a type of $T$ such that the assigned types are \emph{compatible} to the program in the following sense:
\begin{itemize}
  \item $\Gamma_s \sqsubseteq \Gamma_t$ if $t$ is a variable and there is an assignment $t:=s$ for a term $s$.
  \item $\minsignature_{\textit{fun}}(\Gamma_{a_1}, \dots, \Gamma_{a_n})_2 \sqsubseteq \Gamma_t$ if $t$ is a term that consists of the function $\textit{fun}$ applied to the arguments $a_1, \dots, a_n$, i.e. $t=fun(a_1, \dots, a_n)$.
  \item $\tau_0 \sqsubseteq \Gamma_{v_0}$ where $v_0$ is the running variable and $\tau_0$ its type (i.e. \texttt{int} for scenarios if \texttt{map} is applied to an integer list and $list(2, \texttt{real})$ if $v_0$ is a pixel coordinate).
  \item $\tau_u \sqsubseteq \Gamma_u$ if $u\in U$, i.e. $u$ is a uniform expression (this includes constant expressions), $u$ will be computed by the CPU, and is of type $\tau_u$.
\end{itemize}
Whenever a user-defined function with some arguments is called, then a ``virtual assignment'' for each of the argument variables is added, in order to make it possible to compute specific types of the arguments and establish a type assignment that also involves user-defined functions.

To compute a minimal compatible type assignment $\Gamma \in T^{\E}$, we will utilize the Kleene fixed-point theorem \citep{stoltenberg1994mathematical}: Let $\bot \in T^{\E}$ (i.e. $\bot=(\bot, \dots, \bot)$ -- no type is set yet), then we can iteratively apply a monotone function $F:T^{\E} \to T^{\E}$ on $\bot$. Since $T^{\E}$ fulfills the ACC, the ascending chain \[
\bot \sqsubseteq F(\bot) \sqsubseteq F(F(\bot)) \sqsubseteq \dots
\] eventually becomes stationary at a fixed point of $F$ and according to the Kleene fixed-point theorem, this stationary fixed point is the unique \emph{minimal} fixed point of $F$.

 In order to compute a minimal $\Gamma\in T^{\E}$ that fulfills the \emph{compatibility} requirements listed above, we choose the following monotone $F:T^{\E} \to T^{\E}$:
\[ 
F(\Gamma)_t =  \begin{cases}
 \Gamma_t \sqcup \bigsqcup_{i=1}^n \Gamma_{s_i} & \text{if $t$ is a variable and}\\
 & \text{there are the assignments $t:=s_1, \dots, t:=s_n$} \\
 \minsignature_{\textit{fun}}(\Gamma_{a_1}, \dots, \Gamma_{a_n})_2 & \text{if $t=fun(a_1, \dots, a_n)$} \\
\Gamma_t \sqcup \tau_0 & \text{if $t=v_0$ is the iteration variable} \\
\Gamma_t \sqcup \tau_u & \text{if $t=u$ is a uniform variable/term}
\end{cases}
\]
The function $F$ is component-wise monotone and hence also monotone in the product lattice. The component-wise monotonicity follows either by using the least upper bound of the input parameter or by the monotonicity of $\minsignature$.

Any fixed point of $F$ fulfills the \emph{compatibility} requirements listed above for $\Gamma$. 
Hence, we can compute a \emph{minimal} type assignment $\Gamma \in T^{\E}$ that fulfills the \emph{compatibility} requirements by iteratively applying $F$ to $\bot$ until it becomes stationary and we take this fixpoint of $F$ as $\Gamma$. This corresponds to a start with unset types and updating the types to the lowest suitable type whenever there is a place where the currently set types are not sufficiently expressible.

\begin{example}
Consider the example code 
\begin{minted}[fontsize=\small,mathescape]{CindyScript}
a = -2;
b = sqrt(a);
a = b + 1;
\end{minted}
\noindent It contains the terms $\E = \{ \mintinline{CindyScript}{a}, \mintinline{CindyScript}{-2}, \mintinline{CindyScript}{b}, \mintinline{CindyScript}{sqrt(a)}, \mintinline{CindyScript}{b+1}, \mintinline{CindyScript}{1}\}$. Using the presented fixed-point algorithm, we will determine the types of all the terms. We will start with a type assignment $\bot \in T^\E$ and iteratively apply $F$, which uses $\minsignature$ from Example \ref{minsignatureexample}:
  \begin{center}
    \begin{tabular}{r|cccccccc}
       & $\bot$ & $F(\bot)$ & $F^2(\bot)$ & $F^3(\bot)$ & $F^4(\bot)$ & $F^5(\bot)$ & $F^6(\bot)$ & $F^7(\bot)$ \\
       \hline
       \mintinline{CindyScript}{a} & $\bot$ & $\bot$ & \code{int} & \code{int} & \code{int} & \code{int} & \code{complex} & \code{complex} \\
       \mintinline{CindyScript}{b} & $\bot$ & $\bot$ & $\bot$ & $\bot$ & \code{complex} & \code{complex} & \code{complex} & \code{complex}\\
       \mintinline{CindyScript}{sqrt(a)} & $\bot$ & $\bot$ & $\bot$ & \code{complex} & \code{complex} & \code{complex} & \code{complex} & \code{complex}\\
       \mintinline{CindyScript}{b+1} & $\bot$ & $\bot$ & $\bot$ & $\bot$ & $\bot$ & \code{complex} & \code{complex} & \code{complex}\\
       \mintinline{CindyScript}{-2} & $\bot$ & \code{int} & \code{int} & \code{int} & \code{int} & \code{int} & \code{int} & \code{int} \\
       \mintinline{CindyScript}{1} & $\bot$ & \code{int} & \code{int} & \code{int} & \code{int} & \code{int} & \code{int} & \code{int} \\
    \end{tabular}
  \end{center}
Since $F^6(\bot) = F^7(\bot)$, this value gives a valid and minimal typing for the code.
\end{example}
\subsection{Transcompilation}\label{transcompilation}

Once a typing $\Gamma \in T^{\E}$ is computed, the transcompilation to GPU shader code is straightforward. If the program was not well typed, i.e. there is an expression $t\in \E$ with $\Gamma_t = \top$, then the computations will be evaluated on the CPU. Otherwise (and we hope in most cases), we apply the following scheme:

An expression $t=fun(a_1, \dots, a_n) \in D$ can be converted to shader code by carrying out the following three steps:
\begin{enumerate}[(1)]
  \item First, translate the arguments $a_1, \dots, a_n$ recursively to shader code.
  \item Then up-cast each of the translated arguments $a_i$, that has type $\Gamma_{a_i}$, to the type \[\left({\minsignature_{\textit{fun}}(\Gamma_{a_1}, \dots, \Gamma_{a_n})}_1\right)_i\] employing the subtype-embedding function. Note that by definition $(\Gamma_{a_1}, \dots, \Gamma_{a_n}) \sqsubseteq {\minsignature_{\textit{fun}}(\Gamma_{a_1}, \dots, \Gamma_{a_n})}_1$.
  \item Lastly, embed the implementation in the shader language of $\textit{fun}$ that corresponds to the signature $\minsignature_{\textit{fun}}(\Gamma_{a_1}, \dots, \Gamma_{a_n})$ into the header of the generated shader code (provided it has not already been done) and return the application of this function to the up-casted translated arguments as translated shader code for $fun(a_1, \dots, a_n)$.
\end{enumerate}
For each uniform variable/term, a new unique variable name will be generated and used in the program.

After this translation to shader code, the driver-dependent GPU compiler will be used to compile and link the generated shader code.

According to \cite{marrin2011webgl}, WebGL 1.0 allows only a fixed number of loop repetitions and fixed-length arrays. However, often the number of loop repetitions or the length of a list is based on input parameters. In this case of a changing number of repetitions, a re-compilation becomes necessary after the corresponding variables have changed their value. If the constant values are encoded as separate types, this re-compilation can be enforced without significant effort. A change of these constants would correspond to a type change, and, according to our scheme introduced in Section \ref{concept}, this would trigger a re-compilation.

\subsection{Lazy storage of data}\label{lazysec}
A suitable way to store the outcomes of shader programs is to write to textures. Reading this data is possible by texture lookups. According to \cite{gregg2011data}, the memory transfer between CPU and GPU has severe effects on the running time of an application and thus has to be minimized. Therefore, we store GPU-generated data exclusively on the GPU as long as necessary, and the data is only visible for shader programs that run on the GPU. Only if there are read accesses by the part of the code that runs on the CPU, data will be transferred.

Within a single shader program, the preference is often to write data to a target (texture) that is also used for reading. That is problematic for many GPU-APIs. For instance, the WebGL API specifies that the occurrence of operations that both write to and read from the same texture, creating a feedback loop, will generate an error, see \cite[6.26]{marrin2011webgl}. For these feedback loops, a ping-pong approach can be used: If both read and write accesses on a texture object are detected, the texture will be stored twice: one texture for reading and another target texture for writing. After the execution of a shading program that writes to the corresponding texture, the two textures will be swapped. For following reading accesses, data will be read from the generated texture.

\section{Example Implementation: \emph{CindyGL}}\label{example-implementation-cindygl}

In \citep{montag2016cindygl}, we demonstrated an implementation of our proposed concepts.
We developed a plug-in called \emph{CindyGL} for \emph{CindyJS}. \emph{CindyJS} \citep{von2016cindyjs} aims to be a open source web-compatible porting of the dynamic geometry software \emph{Cinderella} \citep{richter2000user} \footnote{The entire project project is open source and available at \url{https://github.com/CindyJS/CindyJS}}.
\emph{CindyGL} can translate the \emph{Cinderella} inherent untyped scripting language \emph{CindyScript} to \emph{GLSL} and it enables a smooth integration of dynamic geometry, CPU, and GPU programming.

As \emph{CindyJS} runs in a web environment, \emph{CindyGL} can utilize WebGL to execute the compiled \emph{GLSL}-Code on the GPU.
This capability leads to easy portability because WebGL-capable browsers are widespread.  By August 2018, almost all\footnote{According to
\url{https://webglstats.com/}, $98\%$ in August 2018} of the browsers used on desktops, smartphones and tablets supported WebGL 1. Visualizations that use \emph{CindyGL} can be quickly distributed because no additional software usually has to be installed to access WebGL-based contents. Also, the visualizations will be comfortably accessible on new devices that provide a WebGL-capable browser. If no acceleration is available through the GPU, as a fallback solution, the smooth language integration can be utilized to execute all the code on the CPU.
\newcommand{\C}{\mathcal{C}}

The comparable simplicity to program GPU accelerated applications through \emph{CindyGL} has led to several implementations. Examples of visualizations that have been generated through \emph{CindyGL} can be found online\footnote{For example, in our web-gallery \url{https://cindyjs.org/gallery/cindygl/}} and tested on almost every browser. In the following sections, we will present a short introduction to \emph{CindyGL}\footnote{A more elaborate tutorial is available online at \url{https://cindyjs.org/docs/cindygltutorial/}} and briefly demonstrate some \emph{CindyGL}-applications.

\subsection{Usage of \emph{CindyGL}}

\emph{CindyGL} implements a command called \mintinline{CindyScript}{colorplot}. The \mintinline{CindyScript}{colorplot} command assigns a color to each pixel of the screen or a given area according to a given function.  The given function is usually dependent on the pixel coordinate \mintinline{CindyScript}{#}. \mintinline{CindyScript}{#} is a 2-component vector. Alternatively, if another free variable is detected, it can also become the running variable. If both the variables \mintinline{CindyScript}{x} and \mintinline{CindyScript}{y} are free, then \mintinline{CindyScript}{colorplot} will interpret \mintinline{CindyScript}{x} and \mintinline{CindyScript}{y} as the coordinates of a pixel. If the free variable \mintinline{CindyScript}{z} is used, then the coordinate will be interpreted as a complex number with \mintinline{CindyScript}{z = x+i*y}.

If the function within the \mintinline{CindyScript}{colorplot} statement attains real numbers as values, then its color will be interpreted as a grayscale value where $0$ is black, and $1$ is white. If the result is a three-component vector, then the vector will be interpreted as an RGB value. If there is also a fourth component, the value of this component will be interpreted as an alpha value (transparency).

\begin{example} \label{waveexample}
The following \emph{CindyScript} code is an example to render a circular sinusoidal wave through a \mintinline{CindyScript}{colorplot}:
\begin{minted}[fontsize=\small,mathescape]{CindyScript}
colorplot( // assigns to each pixel with coordinate # a color
 1/2+1/2*sin(|#|-seconds()) // an animated centered sinusidial wave
);         // seconds() is the current time in seconds.
\end{minted}
\noindent Once \emph{CindyJS} executes the code, a circular wave as in Fig. \ref{fig:cindyglapps1}a becomes visible. By executing the code many times within a second, an animation is created. Since the computation is accelerated on the GPU, real-time rendering is possible on almost every modern device.
\end{example}
\begin{figure}
  \begin{center}
    \begin{adjustbox}{width=\textwidth}
    \includegraphics[height=3cm]{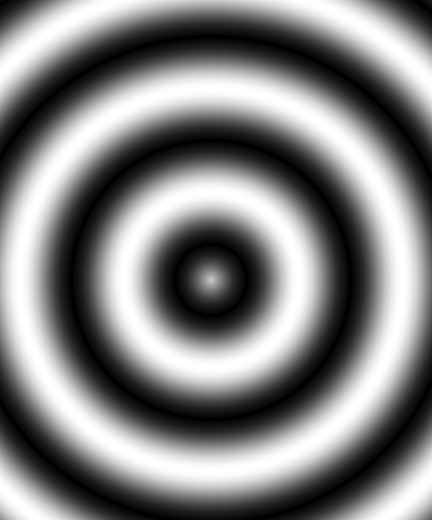} \hfill
    \includegraphics[height=3cm]{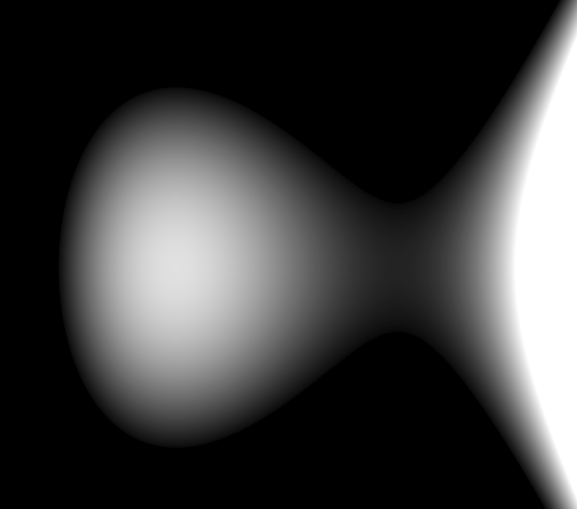} \hfill
    \includegraphics[height=3cm]{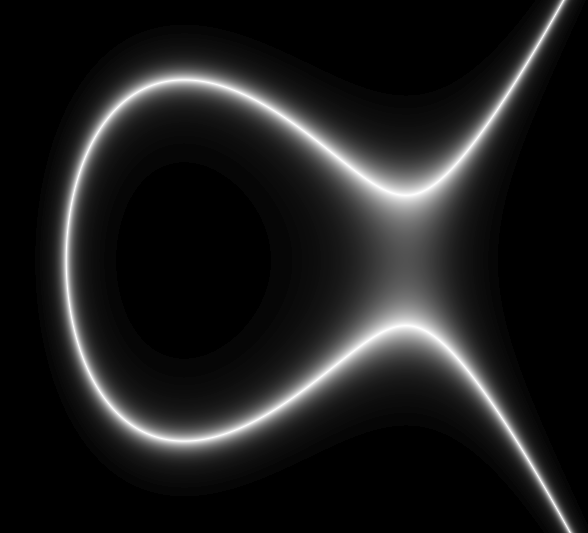} \hfill
    \includegraphics[height=3cm]{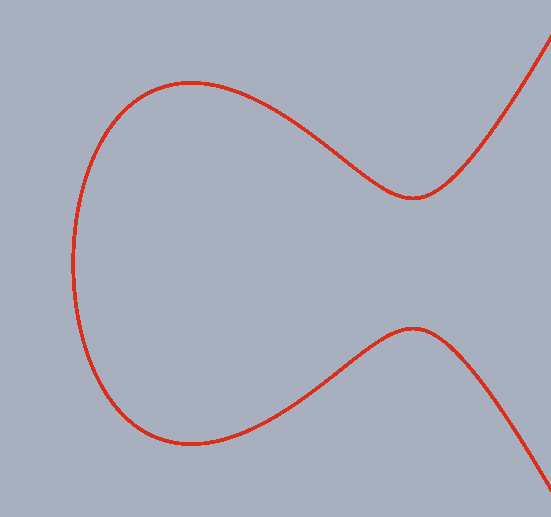}
    \end{adjustbox}
    \caption{(a) Circular sinusoidal wave generated by the \mintinline{CindyScript}{colorplot} command. (b), (c), (d): Visualizations of the elliptic curve $y^{2} = x^{3}-x+\frac{1}{2}$ as zero set of $f(x,y) = x^{3}-x+\frac{1}{2} - y^{2}$: (b) \mintinline{CindyScript}{colorplot(f(P))} making the values of $f(P) \in [0,1]$ distinguishable as grayscale value (values outside the range are either black or white), (c) \mintinline{CindyScript}{colorplot(exp(-10*|f(P)|))} highlighting the zero set and (d) by an approach that detects points close to the curve based on the intermediate value theorem.}
    \label{fig:cindyglapps1}
  \end{center}
\end{figure}

\emph{CindyGL}, as an implementation of the concepts described in Section \ref{concept}, compiles the suitable parts of the code which is passed to \mintinline{CindyScript}{colorplot} to \emph{GLSL} and then to WebGL-binary the first time the command is called on a given expression. For consecutive calls, the compiled program is re-used as long as the occurring types have not changed. If the types have changed in the meantime, a recompilation is forced.

In the following sections, we will present a selection of various use case scenarios of \emph{CindyGL}. %
\subsection{Implicit curves and sets of locus within dynamic geometry software}

\newcommand{\R}{\mathbb{R}}
We assume that we are in the Euclidean plane and $f:\R^2\to \R$ is a smooth function. Instead of a plot of $f$, often the variety $V(f) := f^{-1}(\{0\})$ is of special interest. In this section, we assume that the gradient of $f$ does not vanish on $V(f)$, which makes $V(f)$ (locally) a curve by the implicit function theorem. The \mintinline{CindyScript}{colorplot} command gives a very simple pipeline to visualize these curves. It is still very efficient compared to conventional approaches (for no known parametrizations) because the GPU performs the computations in parallel.

An example of implicit curves are elliptic curves. They can be defined by the equation
\[
y^{2}=x^{3}+a x+b
\]
where $a$ and $b$ are real numbers such that $\Delta =-16(4a^{3}+27b^{2})\neq 0$.

We define an implicit $f$ as follows
\begin{minted}[fontsize=\small,mathescape]{CindyScript}
f(P) := (
  x = P.x; y = P.y;
  x^3 + a*x + b - y^2 // last line: return value
);
\end{minted}
\noindent It is clear that $f(P)=0$ iff $P$ is in the elliptic curve.
The evaluation of $\mintinline{CindyScript}{colorplot(f(P))}$ renders an image as in Fig. \ref{fig:cindyglapps1}b ($P$ is a free variable and therefore will be detected as a varying pixel coordinate). The elliptic curve is located where black just becomes gray.
How can we visualize the points $P$ in the plane such that $f(P)$ becomes zero more apparently? In general, running a test to determine whether $f(P)$ becomes zero for a finite set of pixel coordinates $P$ obtained by rasterization is problematic because of numeric issues and the fact that it is improbable that $P$ will hit a zero of $f$, even if the corresponding (rectangular) pixel contains zeros of $f$ because if $f$ is a non-constant algebraic function, the zero set has Lebesgue measure 0.
 \cite{liste2014color}, who utilizes the ``sweeping-line'' GeoGebra on the CPU, suggests a rendering approach, which can be modeled in CindyGL via the command \mintinline{CindyScript}{colorplot(exp(-10*|f(P)|))} (see Fig. \ref{fig:cindyglapps1}c). All of Liste's applications can be easily transferred to the modern GPU-based \mintinline{CindyScript}{colorplot} via \emph{CindyGL} and work in real-time on this architecture.
 
A straightforward approach to determine a subset of pixels that contain a part of the implicit curve is the evaluation of the values of $f$ at each corner of every pixel. If these values can be separated by zero, then according to the intermediate value theorem, the curve must ``enter'' the pixel, and the pixel can be marked as one that contains the curve. This approach can be implemented in CindyGL via
\begin{minted}[fontsize=\small,mathescape]{CindyScript}
tinysquare = [[-1,-1],[-1,1],[1,-1],[1,1]]/100; // the corners of a pixel
colorplot(
  // evaluate f at the corners of pixel with center P
  values = apply(tinysquare, delta, f(P+delta));
  if(min(values) <= 0 & 0 <= max(values),
     [1,0,0,1], // the signs swap => plot red with full alpha
     [0,0,0,0]  // all signs are same => plot nothing (transparent)
  );
);
\end{minted}
\noindent If the gradient of $f$ does not vanish at $V(f)$ (for the elliptic curves this is guaranteed by $\Delta \neq 0$) and if the curve has a sufficiently low curvature, then all pixels containing a substantial part of the curve are detected by this method, and a meaningful image can be acquired. Replacing \mintinline{CindyScript}{tinysquare} with a bigger $n$-gon,  provides a straightforward approach to render the curve less precise, but bolder (see Fig. \ref{fig:cindyglapps1}d).

\begin{figure}
  \begin{center}
    \begin{adjustbox}{width=\textwidth}
    \includegraphics[height=5cm]{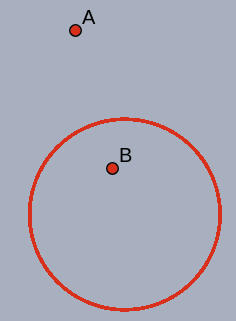} \hfill
    \includegraphics[height=5cm]{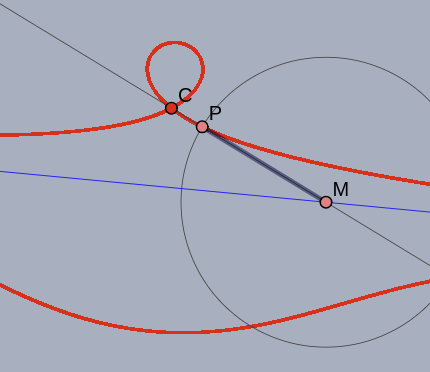} \hfill
    \includegraphics[height=5cm]{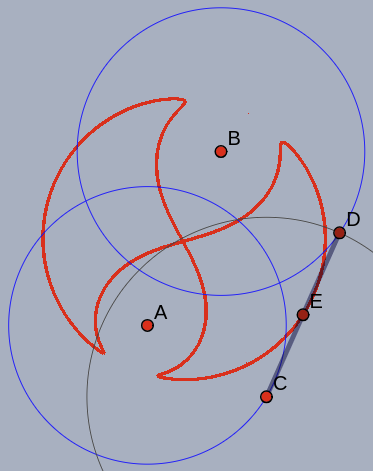} \hfill
    \end{adjustbox}
    \caption{(red) Loci computed on the GPU via the \mintinline{CindyScript}{colorplot} command: (a) All points $P$ such that $\frac{|P-A|}{|P-B|}=2$ (b) The conchoid of Nicomedes, i.e. the set of all points $P$ that are obtained by moving $M$ in the construction. The second branch of the conchoid can be reached if $M$ is moved through infinity
    (c) Watt's curve, which describes the set of all points $E$ such that $E$ is the midpoint of $C$ and $D$, which are two points on distinct circles and have a constant distance $|C-D|$.
    }
    \label{fig:cindyglloci}
  \end{center}
\end{figure}

These implicit curves are of particular interest in dynamic geometry software if they describe the locus of points by geometric terms.
As \cite{botana2014automatic} point out, the computation of loci can be a tool for \emph{computer-aided discovery} of mathematical properties. As a very simple example, given two distinct points $A$ and $B$ in the Euclidean plane and a ratio $r\in \mathbb{R}_{>0}$, a user of DGS could be interested in the set of all points $P$ such that the ratio $\frac{|P-A|}{|P-B|}$ takes the value $r$. The set of those points can be visualized by the approach above by exchanging $f$ with
\begin{minted}[fontsize=\small,mathescape]{CindyScript}
f(P) := |P-A|/|P-B|-r;
\end{minted}
\noindent The rendered curve, which is the zero set of $f$, are all the points fulfilling $\frac{|P-A|}{|P-B|}=r$ (Fig. \ref{fig:cindyglloci}a shows an image for $r=2$). When rendering this curve, it instantly becomes visible that those lines of an equal ratio of distances are circles, or if $r=1$, the perpendicular bisector (line) of $A$ and $B$. Certainly, this observation of an image does not give a formal proof. However, it calls to find a proof of such properties.

A more complex locus is the conchoid of Nicomedes. Let $a$  be a line, and $C$ be a point. Furthermore, let $C_0$ be a circle with its center $M$ on $a$ and a fixed radius. The conchoid is the set of all points $P$, that can be obtained as the intersection of $C_0$ and the line connecting $M$ and $C$ if $M$ is moved along $a$. This condition can also be written as an implicit formula for the point $P$ by building a backward construction of $M$ based on a point $P$. The point $M$ can be constructed as the intersection of $a$ and the line connecting $C$ and $P$. $P$ is contained in one of the two branches of the conchoid if $P$ lies on a circle with $M$ as the midpoint and the radius $C_0$. The geometric construction and the check give rise to the following definition of $f$, that in a sense reverses the construction sequence:
\begin{minted}[fontsize=\small,mathescape]{CindyScript}
f(P) := (
  l = join(C, P); // the line connecting C and P
  M = meet(l, a); // the intersection of l and a
  |P.xy-M.xy|-C0.radius // return value; $=0$ $\Leftrightarrow$ P is in the locus
);
\end{minted}
\noindent The zero set of $f$, which is the locus set of the conchoid, can be rendered by the approach above (omitting potential singular points). The results can be seen in Fig. \ref{fig:cindyglloci}b. If the geometric construction, which is built within the DGS, is modified, then the GPU computes the new locus in real-time. Note that also types for geometric primitives, such as points and lines, and basic geometric operations have been implemented as a data-type in CindyGL. This enables the transcompilation of such functions $f$ to GPU code and thus enables a stronger interplay of dynamic geometry software and GPU computations.

We were also able to compute the locus of Watt's curve on the GPU (see Fig. \ref{fig:cindyglloci}c) utilizing a similar geometric ``backward construction''. 
The construction involves the intersection of circles and the reflections around a point, which again has been transcompiled to the GPU by our approach and the images of the locus are rendered in real-time. The generalization and mechanical application of this technique of using a backward construction to yield an implicit formula could lead to further research.

\subsection{Raycasting algebraic surfaces}

\begin{figure}
  \begin{center}
    \begin{adjustbox}{width=\textwidth}
    \includegraphics[height=4cm]{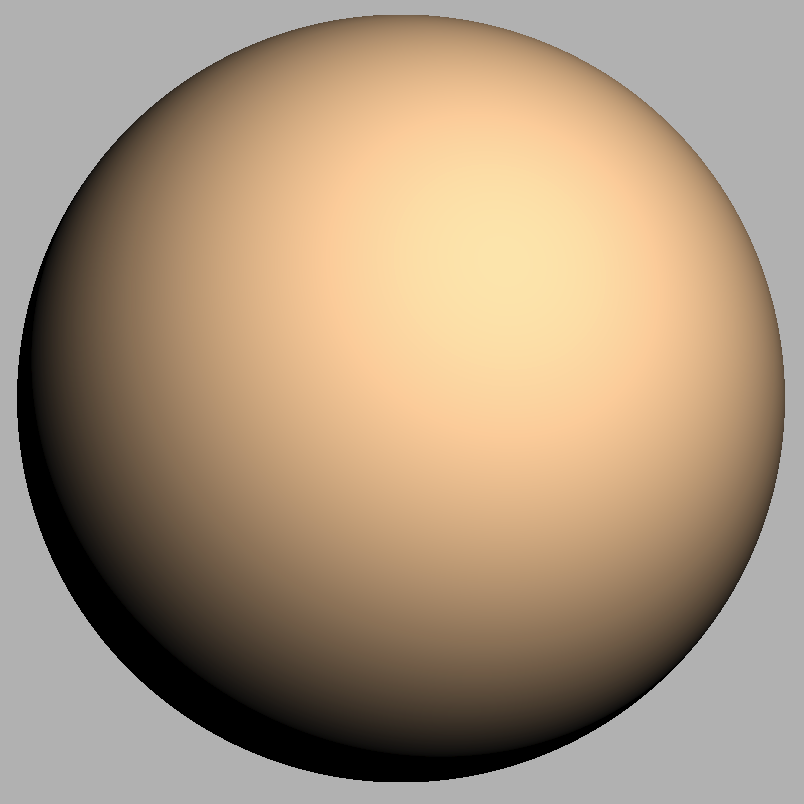} \hfill
    \includegraphics[height=4cm]{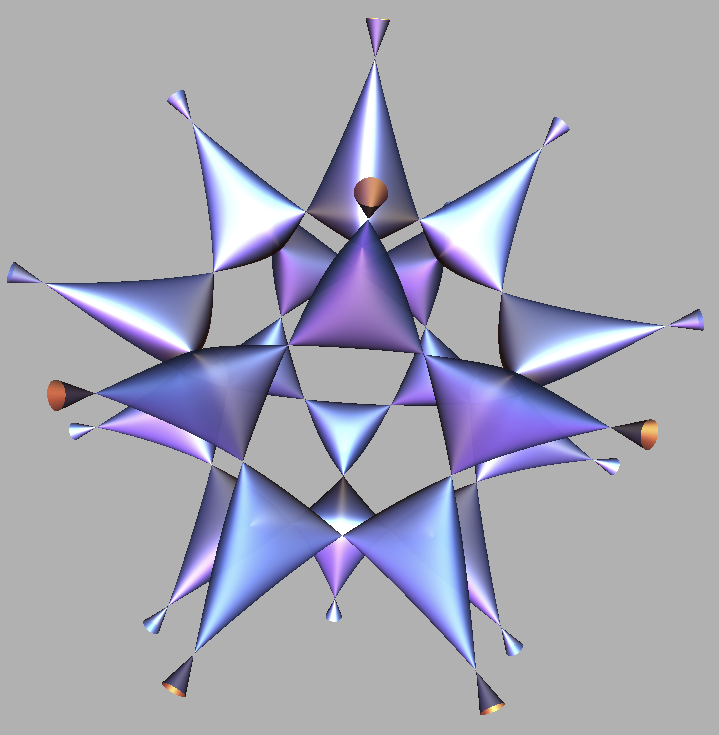} \hfill
    \includegraphics[height=4cm]{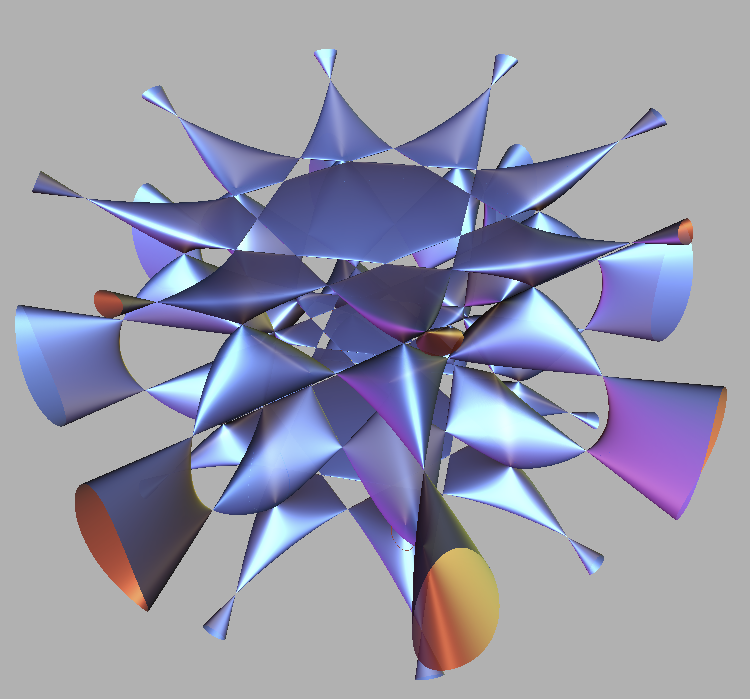} \hfill
    \end{adjustbox}
    \caption{Algebraic surfaces raycasted via the \mintinline{CindyScript}{colorplot} command: (a) the sphere (b) Barth Sextic and (c) Endraß Octic}
    \label{fig:cindyglraycasted}
  \end{center}
\end{figure}

Motivated by the software \emph{SURFER} \citep{stussak2009realsurf} and the fact 
that it is likely that this Java-based platform will no longer be supported (i.e. maintainable) in a few years, a further design goal of CindyGL was to provide user-friendly machinery that can be used to implement algorithms for rendering algebraic surfaces.
In the following, let $F:\R^3\to\R$ be algebraic and irreducible. The aim is to visualize the surface $V(F) = F^{-1}(\{0\})$ by raycasting.

Mathematically, the process of raycasting can be described as follows: For each pixel $p$ of a rasterization screen, there is a ray (an affine linear function) $r_p:\R\to\R^3$  that passes through the pixel on an imaged projection plane and possibly ``hits'' the surface $V(F)$ at some coordinate $s_p \in R^3$ (possible within a predefined clipping area). The pixel $p$ can be colored according to $s_p$. In order to obtain good visualizations, the normalized gradient $\frac{\Delta F(s_p)}{\|\Delta F(s_p)\|}$ can be taken into consideration; the shading can be done based on the scalar product of a light direction and the normal of the surface. An important observation is that the function $f_p = F \circ r_p: \R \to \R$ is a polynomial and the degree of $f_p$ is bounded by the degree of $F$. A root $t$ of $f_p$ corresponds to the root $r_p(t)$ of $F$. Hence, the problem of raycasting algebraic surfaces can essentially be reduced to the problem of locating the real roots of a univariate polynomial for each pixel. This scheme is suitable for massively parallel computations on the GPU.

\begin{example}
A simple example is the function $F(x,y,z)=x^2+y^2+z^2-1$ with $V(F)=S^2$. For simplicity, we assume that a pixel $p$ with coordinates $(x,y)$ is associated with the ray $r_p(t) = (x, y, t)$ and we yield the quadratic polynomial $f_p(t) = F \circ r_p(t) = t^2+(x^2+y^2-1)$, having the real roots $\pm \sqrt{x^2+y^2-1}$ if $x^2+y^2-1\geq 0$. Let $s_p$ be the first intersection of the ray $r_p$. It exists if $x^2+y^2-1\geq 0$ and has the value $s_p = r_p(-\sqrt{x^2+y^2-1}) = (x,y, -\sqrt{x^2+y^2-1})$. Since $s_p \in S^2$, we obtain the normal $\frac{\Delta F(h_p)}{|\Delta F(s_p)|} = s_p$ for free. This yields the following \emph{CindyScript} code that renders the sphere:
\begin{minted}[fontsize=\small,mathescape]{CindyScript}
lightdir = [.3,.4,-1]; lightcolor = [1,.8,.6]; background = [.7,.7,.7];

colorplot(
  if(1-x^2-y^2>=0,
    s = (x,y,-|sqrt(1-x^2-y^2)|); // intersection with the sphere
    (s*lightdir) * lightcolor, // shading based on normal s
    background
  )
);
\end{minted}
\noindent CindyGL can transcompile the code within the \mintinline{CindyScript}{colorplot} command to GPU shader code, and an image (see Fig. \ref{fig:cindyglraycasted}a) of the sphere is rendered on the GPU.
\end{example}

For general algebraic surfaces of higher degree, the roots of the functions $h_p$ have to be found using a numerical method. We followed the approach of \cite{stussak2009realsurf} and got an easy implementation for raycasting ``tame'' algebraic surfaces via \emph{CindyGL}. For each ray $r_p$, we evaluate $\deg F + 1$ values of $F$ along the ray $r_p$ and obtain the polynomial $f_p$ by interpolation (this avoids symbolic computations). Descartes's rule of signs is used to isolate the real roots of the univariate polynomials $f_p$. We represent the polynomials $f_p$ in the Bernstein basis to enhance the numeric stability. Furthermore, the Bernstein basis further simplifies the implementation because the number of sign variations needed to apply Descartes's rule corresponds to the number of sign variations of the coefficients in Bernstein basis with respect to the investigated interval \citep{sagraloff2016computing}. Once the real roots are isolated, the bisection method is used to approximate the points where $r_p$ intersects the surface. Our raycaster is available online\footnote{\url{https://cindyjs.org/gallery/cindygl/Raytracer/}} and renders visualizations of several algebraic surfaces in real-time (see Figs. \ref{fig:cindyglraycasted}b and \ref{fig:cindyglraycasted}c for examples).

\subsection{Editing (spherical) images in real-time and the spherical Droste effect}

\emph{CindyJS} can also read images, the live feed from a webcam or a spherical camera. \emph{CindyGL} can apply mathematical transformations on these images in real-time.

\cite{mercatgagern2010} introduce several mathematical image transformations (``filters'') that are suitable for the GPU and present applications the fields of education and the arts. The filters include the generation of (hyperbolic) tilings by wallpaper groups and the deformation of images using conformal maps. von Gagern and Mercat note that using live footage, such as from a webcam, could be used to demonstrate mathematical concepts to the masses in an appealing and fascinating way and therefore could raise people’s interest in the underlying mathematical concepts (``edutainment'').
The filters presented in their work can be implemented in \emph{CindyGL} with relatively little effort. Mercat has already transferred his \emph{conformal webcam} to \emph{CindyGL}\footnote{See \url{http://bit.ly/webcamconf}}.
\begin{figure}
\centering
    \includegraphics[height=5.5cm]{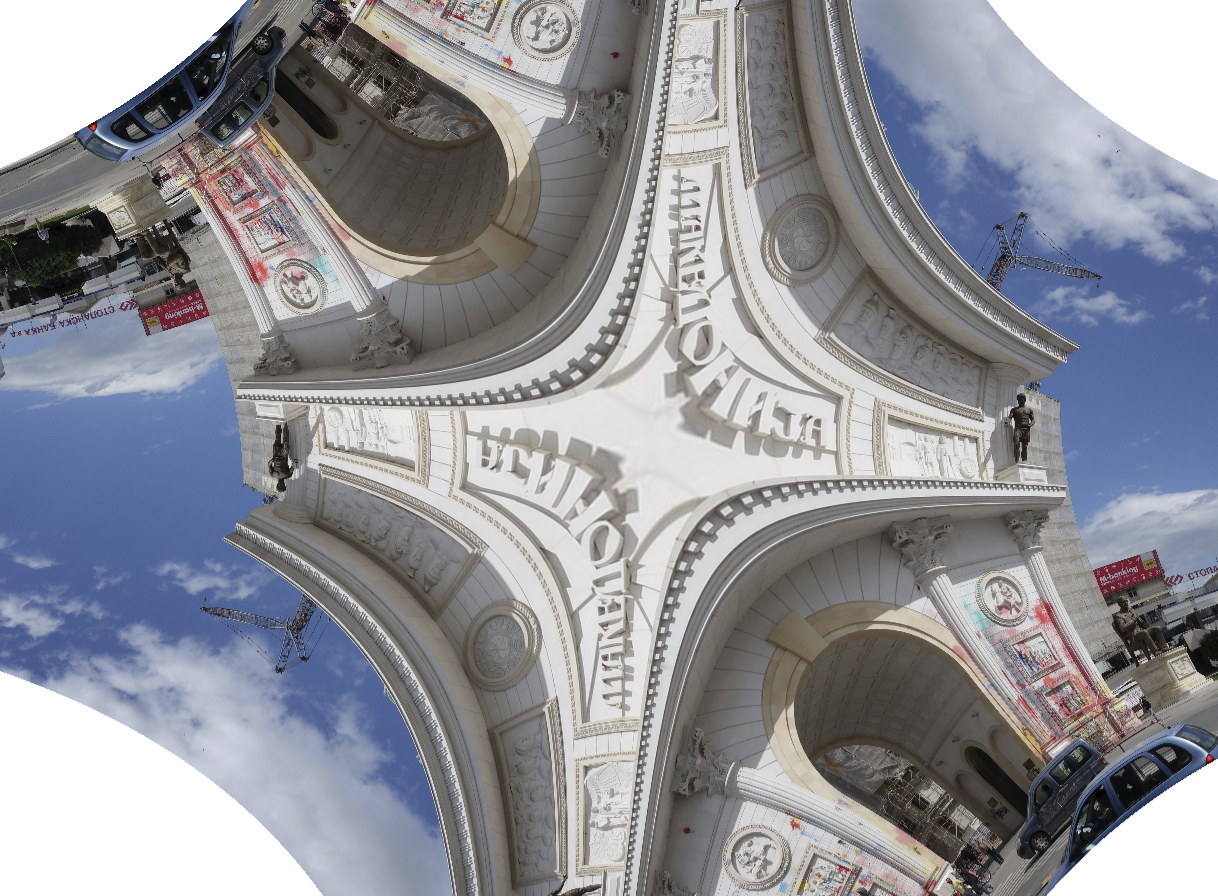} \hfill
    \includegraphics[height=5.5cm]{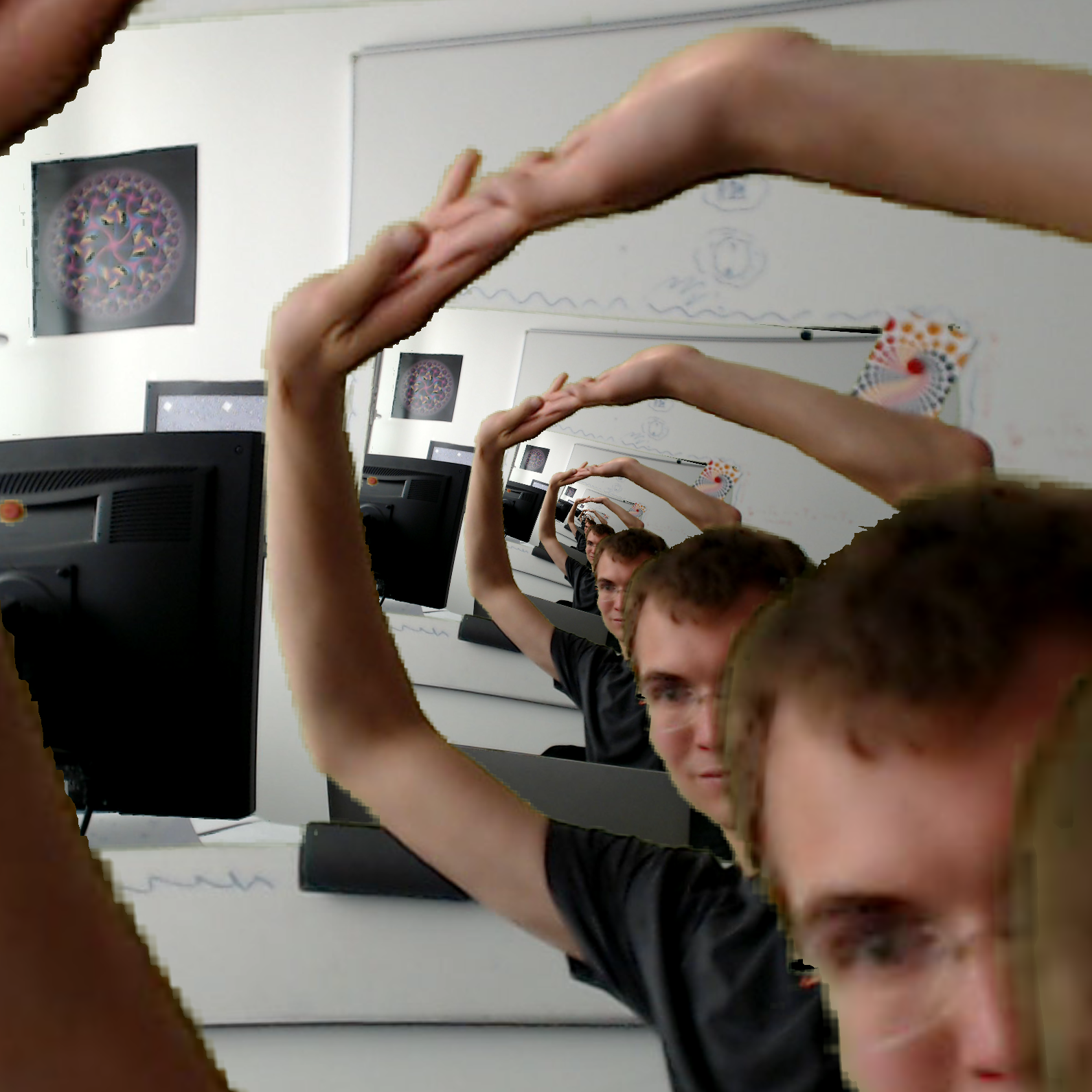}
    \caption{Deformation of images through CindyGL: (a) An image of the Porta Macedonia that has been deformed by squaring of the complex coordinates before looking up the colors in the image (b), The twisted Droste effect applied on (spherical) live footage showing one of the authors in front of green screen}
    \label{fig:deform}
\end{figure}
\begin{example} The one-line program \mintinline{CindyScript}{colorplot(readimage(cameravideo(), z^2));} can be used to set up a \emph{conformal webcam} for the function $f:\mathbb{C} \to \mathbb{C}$, $f(z)=z^2$. Each pixel with the complex coordinate $z$ on the screen gets assigned the (interpolated) pixel color from the position $z^2$ of the webcam. Images as in Fig. \ref{fig:deform}a are generated in real-time. %
\end{example}

\cite{schleimer2016squares} proposed to use Möbius transformations as a tool to edit spherical images. Holomorphic functions can be applied to spherical images if the image either is interpreted as the Riemann sphere or if the image - which is an approach for computation - is mapped to the complex plane via the stereographic projection\footnote{An implementation and visualization of this concept based on \emph{CindyGL} is available online at \url{https://montaga.github.io/stereographic/}}.
Based on the work of \cite{de2003mathematical}, they demonstrated how twisted Droste images could be generated from spherical video footage with the complex exponential, logarithm, and Möbius transformations. Using the ability of \emph{CindyJS} to capture the spherical webcam as an input stream, and applying several mathematical operations on spherical footage in real-time in \emph{CindyGL}, we have implemented an application\footnote{It is also available online at \url{https://montaga.github.io/droste/}} that renders the spherical twisted Droste effect on a green screen (see Fig. \ref{fig:deform}b). The stand-up comedian and mathematics communicator Matt Parker used this spherical Droste application and other \emph{CindyGL} applications for multiple live performances during a tour\footnote{\emph{CindyGL} was used in the ``You Can’t Polish A Nerd'' tour during the ``Festival of the Spoken Nerd'': \url{http://festivalofthespokennerd.com/show/ycpan/}}.

\subsection{Feedback loops and GPGPU applications in \emph{CindyGL}}

A strength of \mintinline{CindyScript}{colorplot} is the ability to plot to a texture besides drawing on the screen.
This rendered texture can be read in consecutive calls of \mintinline{CindyScript}{colorplot} with the \mintinline{CindyScript}{imagergb} command. The possibility to read and write texture data enables the creation of feedback loops on the GPU and opens the door for many GPGPU computations.

If a user reads and writes to the same texture with some deformations in between, video feedback loop can be simulated by pointing a camera at a screen that displays the image that the camera records. This technique presents an interesting approach to rendering fractals:
\begin{example}
Fractals that utilize an escape time algorithm can be rendered through a feedback loop system via \mintinline{CindyScript}{colorplot}. For instance, the filled Julia set for a function $z \mapsto z^2+c$ can be approximated by running the following \emph{CindyScript} source code several times:
\begin{minted}[fontsize=\small,mathescape]{CindyScript}
colorplot("julia", // plot to texture "julia":
 if(|z|<2,         // if $\lvert z\rvert < 2$, take the color from texture
  imagergb("julia", z^2+c) // "julia" at position $z^2+c$ 
   + (0.01, 0.02, 0.03),   // and make it slightly brighter
  (0, 0, 0)        // otherwise: display black.
 )
);
\end{minted}
\noindent The color of a pixel becomes brighter if it takes more iterations to leave the escape radius $2$ from the coordinate of the pixel. After about 50 iterations the fractal in Fig.\ref{fig:feedback}a can be seen.
\end{example}

\begin{figure}
  \begin{center}
    \begin{adjustbox}{width=\textwidth}
    \includegraphics[height=4.3cm]{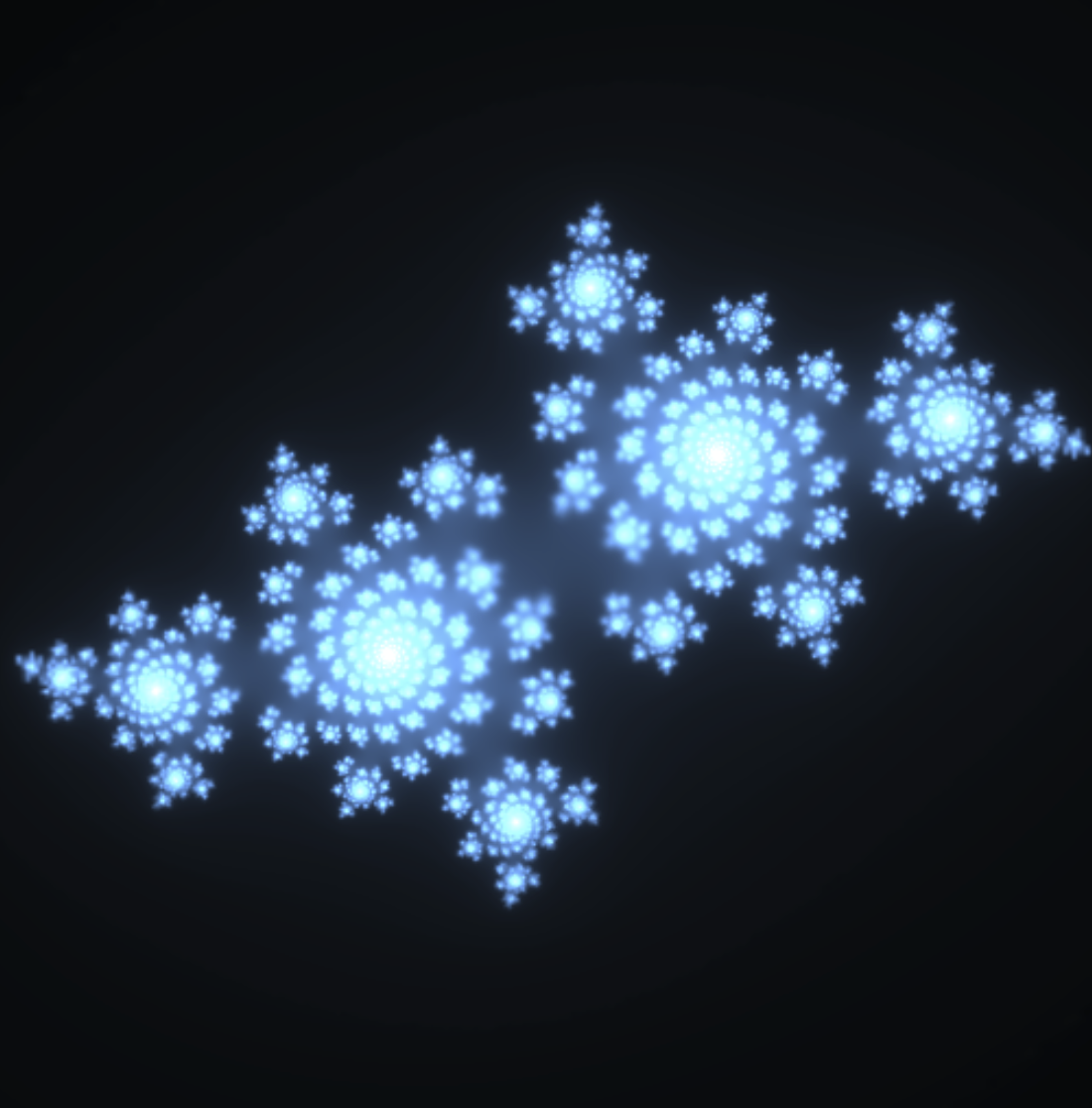} \hfill
    \includegraphics[height=4.3cm]{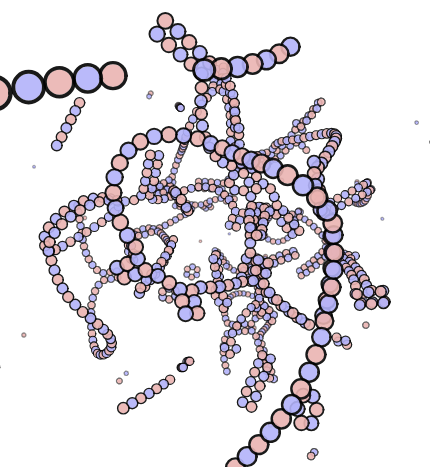} \hfill
    \includegraphics[height=4.3cm]{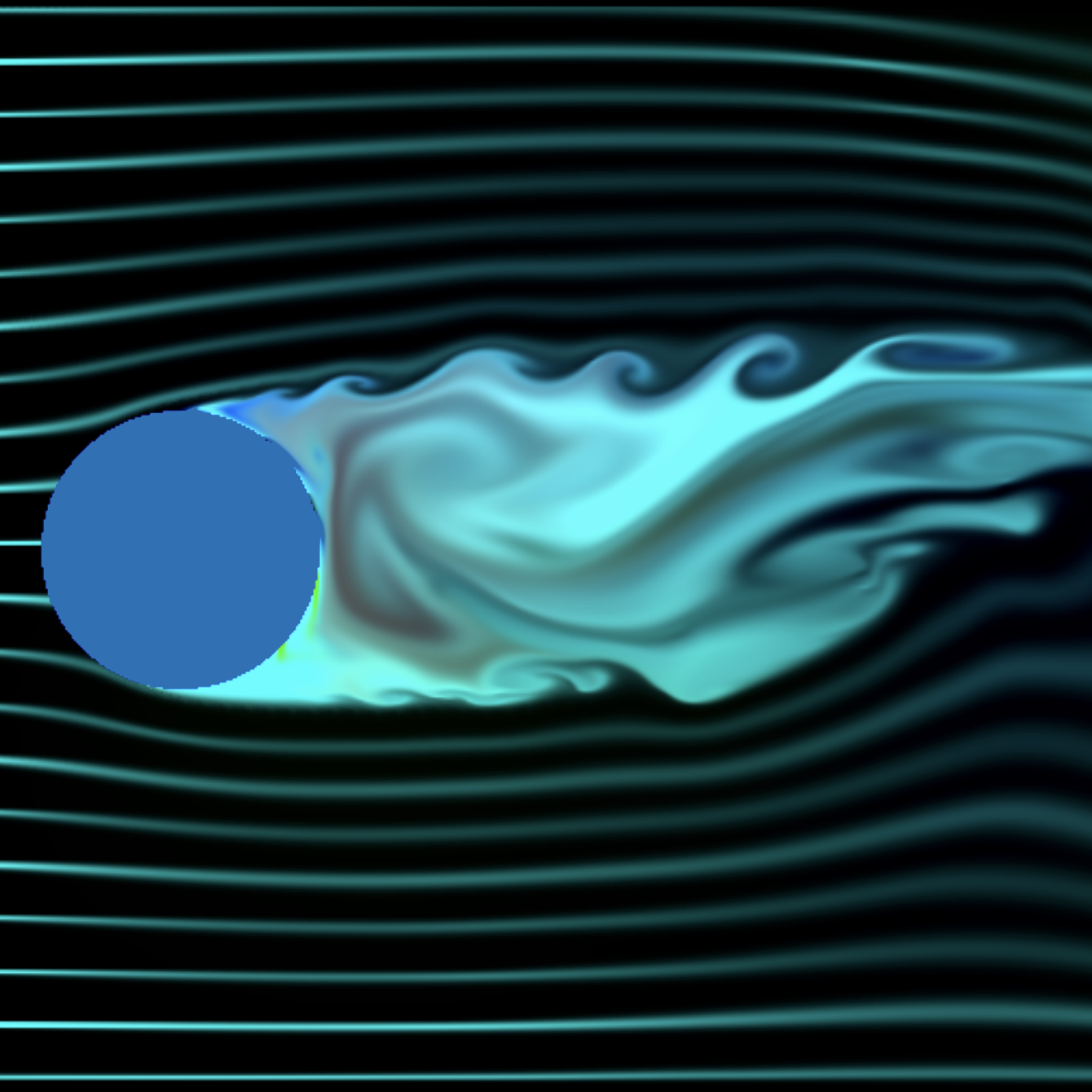}
    \end{adjustbox}
    \caption{(a) The Julia set generated through a feedback loop approach (b) Numeric simulation of the n-body problem with $500$ particles with positive unit charge and $500$ with negative unit charge. All the $n^2=10^6$ interactions (Lennard-Jones potential) are computed and provide a real-time simulation. (c) Numeric simulation of the Navier-Stokes equations}
    \label{fig:feedback}
  \end{center}
\end{figure}

The use of \emph{CindyGL} is not limited to graphical visualizations. Computationally demanding numerical schemes are often good candidates for the GPU as well. Data can be stored on textures, and parallelized computations on the data can be triggered by executing a \mintinline{CindyScript}{colorplot} that reads from these textures via \mintinline{CindyScript}{imagergb}. As an example for GPGPU programming, we used \emph{CindyGL} to simulate the interactions of $n$ bodies (see Fig. \ref{fig:feedback}b). The maximum number of simulated particles that allows a real-time visualization increased drastically compared with a corresponding CPU implementation. We also simulated the behavior of a fluid by approximating the solution of the Navier-Stokes equations on a GPU (see Fig. \ref{fig:feedback}c). The total number of lines of code required to implement this simulation applet with \emph{CindyGL} dropped to about one fourth as compared to an equivalent plain WebGL implementation.

\subsection{Educational value}{

\cite{kaneko2017} argues that dynamic geometry software on personal computers is often not used in classrooms because of technical obstacles. However, providing the teaching content on modern touch devices such as iPads and tablets can overcome this obstacle. Aside from PCs, both \emph{CindyJS} and \emph{CindyGL} are suitable for these modern devices because they only depend on a portable plugin-less web technology. Kaneko showed that using this technology has a positive influence on learning.

So, for teachers, the use of \emph{CindyGL} provides a suitable technique to provide GPU-accelerated content. Using \emph{CindyGL}, the instructors can distribute the interactive content by sharing a single HTML file.

For the students, the task to generate visualization with this framework can be very fruitful as well. Different learning content from mathematics, computer science, physics and other fields can be connected in such tasks. Because of the seamless integration of GPU code in a familiar scripting environment of a DGS, the creation process does not require the user to know or learn another programming language. \emph{CindyGL} eases the programming of shaders. Non-experts can use it to visualize mathematical concepts without having to learn about shader programming.

\begin{figure}
  \begin{center}
    \begin{adjustbox}{width=\textwidth}
    \includegraphics[height=3.5cm]{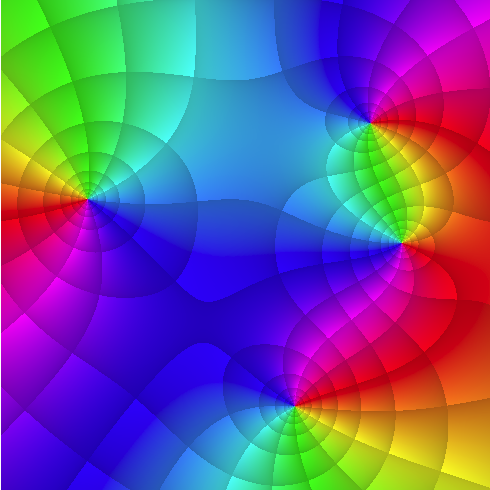} \hfill
    \includegraphics[height=3.5cm]{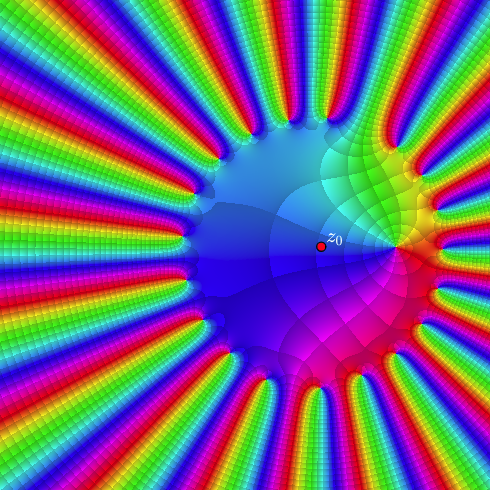} \hfill
    \includegraphics[height=3.5cm]{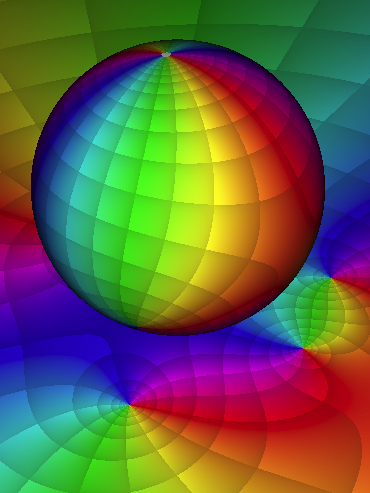} \hfill
    \includegraphics[height=3.5cm]{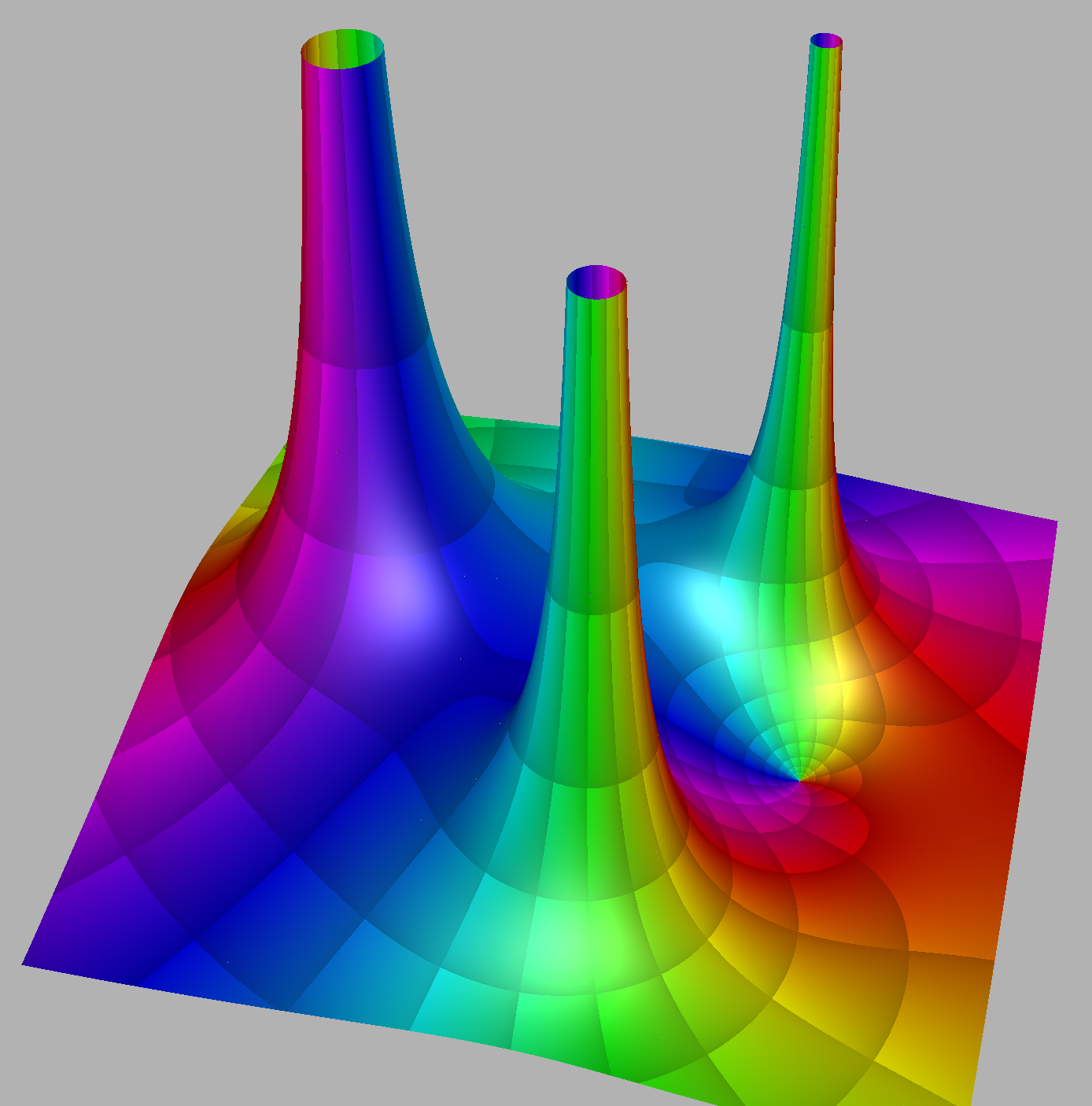}
    \end{adjustbox}

    \caption{\emph{CindyGL} rendered visualizations of the function $f:\mathbb{C}\to\mathbb{C}, z\mapsto \frac{z-1}{z^3+1-i}$: (a) Phase portrait of $f$, (b) truncated Taylor expansion of $f$ at $z_0 = \tfrac{1}{2}$, (c) stereographic projection of $f$ on the Riemann sphere making the double root at $\infty$ visible, (d) analytic landscape of $f$.}
    \label{fig:education}
  \end{center}
\end{figure}

Advanced tasks were assigned to two university students as part of their bachelor's theses, with \emph{CindyGL} provided as a tool. Both students started experimenting with \emph{CindyGL}, drew their mathematical conclusions, and, furthermore, created interactive GPU-based teaching material without previous knowledge of shader programming.
 In his bachelor thesis, \cite{konnerth2017} simulated several concepts of complex analysis with phase portraits of partial sums of the Taylor series (see Fig. \ref{fig:education}b), a ray-casted Riemann-sphere (see Fig. \ref{fig:education}c), and renderings of analytic landscapes (see Fig. \ref{fig:education}d).
 In the other bachelor's thesis, the rolling shutter effect was simulated on the GPU.

\section{Conclusion and Outlook}\label{conclusion-and-outlook}
In this article, we worked out a concept to narrow the gap between CPU and GPU programming for fast prototyping environments such as dynamic geometry software. The key component is the automatic transcompilation of an untyped scripting language to a shader language. The parts of the code that are candidates for parallelization are automatically detected.

In our approach, relevant code is detected through a recursive analysis of the syntax tree, while a fixed-point algorithm determines the types.

We did a sample implementation as a working proof of concept through \emph{CindyGL}. The core component of \emph{CindyGL} is a transcompiler, which can translate the \emph{CindyJS} inherent scripting language \emph{CindyScript} to the OpenGL Shading Language (\emph{GLSL}).

The process could be enhanced by detecting more complicated patterns that are suitable for parallelization on the GPU. We also have shown that more computationally intensive high-level operations in dynamic geometry software such as raycasting or physic simulations can be computed on the GPU within a DGS.

\bibliography{mybib}

\begin{thebibliography}{31}
\expandafter\ifx\csname natexlab\endcsname\relax\def\natexlab#1{#1}\fi
\expandafter\ifx\csname url\endcsname\relax
  \def\url#1{\texttt{#1}}\fi
\expandafter\ifx\csname urlprefix\endcsname\relax\def\urlprefix{URL }\fi

\bibitem[{Botana and Ab{\'a}nades(2014)}]{botana2014automatic}
Botana, F., Ab{\'a}nades, M.~A., 2014. Automatic deduction in (dynamic)
  geometry: Loci computation. Computational Geometry 47~(1), 75--89.

\bibitem[{Catanzaro et~al.(2011)Catanzaro, Garland, and Keutzer}]{copperhead}
Catanzaro, B., Garland, M., Keutzer, K., 2011. Copperhead: {Compiling} an
  embedded data parallel language. ACM SIGPLAN Notices 46~(8), 47--56.

\bibitem[{de~Smit and Lenstra(2003)}]{de2003mathematical}
de~Smit, B., Lenstra, H.~W., 2003. The {Mathematical Structure} of Escher’s
  Print Gallery. na.

\bibitem[{Evans et~al.(2014)Evans, Romeo, Bahrehmand, Agenjo, and
  Blat}]{evans20143d}
Evans, A., Romeo, M., Bahrehmand, A., Agenjo, J., Blat, J., 2014. 3d graphics
  on the web: A survey. Computers \& Graphics 41, 43--61.

\bibitem[{Gregg and Hazelwood(2011)}]{gregg2011data}
Gregg, C., Hazelwood, K., 2011. Where is the data? {Why} you cannot debate
  {CPU} vs. {GPU} performance without the answer. In: Performance Analysis of
  Systems and Software (ISPASS), 2011 IEEE International Symposium on. IEEE,
  pp. 134--144.

\bibitem[{Herlihy and Shavit(2011)}]{herlihy2011art}
Herlihy, M., Shavit, N., 2011. The {Art} of {Multiprocessor} {Programming}.
  Morgan Kaufmann.

\bibitem[{Kaneko(2017)}]{kaneko2017}
Kaneko, M., 2017. Using tangible contents generated by {CindyJS} and its
  influence on mathematical cognition. In: International Conference on
  Computational Science and Its Applications. Springer, pp. 199--215.

\bibitem[{Kl{\"o}ckner et~al.(2012)Kl{\"o}ckner, Pinto, Lee, Catanzaro, Ivanov,
  and Fasih}]{pycuda}
Kl{\"o}ckner, A., Pinto, N., Lee, Y., Catanzaro, B., Ivanov, P., Fasih, A.,
  2012. Pycuda and pyopencl: A scripting-based approach to gpu run-time code
  generation. Parallel Computing 38~(3), 157--174.

\bibitem[{Konnerth(2017)}]{konnerth2017}
Konnerth, G.-M., 2017. Interactive tools for visualising phase plots of complex
  functions. Bachelor's thesis, Technical University of Munich.

\bibitem[{Levine(2009)}]{levine2009flex}
Levine, J., 2009. Flex \& Bison: Text Processing Tools. Reilly Media, Inc.

\bibitem[{Liste(2014)}]{liste2014color}
Liste, R.~L., 2014. El color din{\'a}mico de {GeoGebra}. Gaceta De La Real
  Sociedad Matematica Espa{\~n}ola 17~(3), 525--547.

\bibitem[{Marrin(2011)}]{marrin2011webgl}
Marrin, C., 2011. {WebGL} specification. Khronos WebGL Working Group.

\bibitem[{Montag and Richter-Gebert(2016)}]{montag2016cindygl}
Montag, A., Richter-Gebert, J., 2016. {CindyGL}: {Authoring} gpu-based
  interactive mathematical content. In: International Congress on Mathematical
  Software. Springer, pp. 359--365.

\bibitem[{Munshi et~al.(2011)Munshi, Gaster, Mattson, and
  Ginsburg}]{munshi2011opencl}
Munshi, A., Gaster, B., Mattson, T.~G., Ginsburg, D., 2011. OpenCL Programming
  Guide. Pearson Education.

\bibitem[{Nickolls and Dally(2010)}]{nickolls2010gpu}
Nickolls, J., Dally, W.~J., 2010. The gpu computing era. IEEE micro 30~(2).

\bibitem[{Nvidia(2017)}]{nvidia2017nvidia}
Nvidia, 2017. CUDA C Programming Guide, v8.0.

\bibitem[{Owens et~al.(2007)Owens, Luebke, Govindaraju, Harris, Kr{\"u}ger,
  Lefohn, and Purcell}]{owens2007survey}
Owens, J.~D., Luebke, D., Govindaraju, N., Harris, M., Kr{\"u}ger, J., Lefohn,
  A.~E., Purcell, T.~J., 2007. A survey of general-purpose computation on
  graphics hardware. In: Computer graphics forum. Vol.~26. Wiley Online
  Library, pp. 80--113.

\bibitem[{Reese and Zaranek(2012)}]{reese2012gpu}
Reese, J., Zaranek, S., 2012. {GPU} programming in {MATLAB}. MathWorks
  News\&Notes. Natick, MA: The MathWorks Inc, 22--5.

\bibitem[{Richter-Gebert and Kortenkamp(2010)}]{richter2010power}
Richter-Gebert, J., Kortenkamp, U., 2010. The power of scripting: {DGS} meets
  programming. Acta Didactica Napocensia 3~(2), 67--78.

\bibitem[{Richter-Gebert and Kortenkamp(2000)}]{richter2000user}
Richter-Gebert, J., Kortenkamp, U.~H., 2000. User Manual for The Interactive
  Geometry Software Cinderella. Springer Science \& Business Media.

\bibitem[{Richter-Gebert and Kortenkamp(2012)}]{richter2012cinderella}
Richter-Gebert, J., Kortenkamp, U.~H., 2012. The Cinderella. 2 Manual: Working
  with The Interactive Geometry Software. Springer Science \& Business Media.

\bibitem[{Sagraloff and Mehlhorn(2016)}]{sagraloff2016computing}
Sagraloff, M., Mehlhorn, K., 2016. Computing real roots of real polynomials.
  Journal of Symbolic Computation 73, 46--86.

\bibitem[{Schleimer and Segerman(2016)}]{schleimer2016squares}
Schleimer, S., Segerman, H., 2016. Squares that look round: Transforming
  spherical images. arXiv preprint arXiv:1605.01396.

\bibitem[{Simpson and Kessenich(2009)}]{simpson2009opengl}
Simpson, R.~J., Kessenich, J., 2009. The {OpenGL ES Shading Language}. Language
  Version 1.

\bibitem[{Stoltenberg-Hansen et~al.(1994)Stoltenberg-Hansen, Lindstr{\"o}m, and
  Griffor}]{stoltenberg1994mathematical}
Stoltenberg-Hansen, V., Lindstr{\"o}m, I., Griffor, E.~R., 1994. Mathematical
  theory of domains. Vol.~22. Cambridge University Press.

\bibitem[{Stussak(2009)}]{stussak2009realsurf}
Stussak, C., 2009. {RealSurf} - a {GPU}-based realtime ray caster for algebraic
  surfaces. In: Proceedings of the 25th Spring Conference on Computer Graphics.

\bibitem[{Sutter(2005)}]{sutter2005free}
Sutter, H., 2005. The free lunch is over: A fundamental turn toward concurrency
  in software. Dr. Dobb’s journal 30~(3), 202--210.

\bibitem[{Tarditi et~al.(2006)Tarditi, Puri, and
  Oglesby}]{tarditi2006accelerator}
Tarditi, D., Puri, S., Oglesby, J., 2006. Accelerator: {Using} data parallelism
  to program {GPU}s for general-purpose uses. In: ACM SIGARCH Computer
  Architecture News. Vol.~34. ACM, pp. 325--335.

\bibitem[{von Gagern et~al.(2016)von Gagern, Kortenkamp, Richter-Gebert, and
  Strobel}]{von2016cindyjs}
von Gagern, M., Kortenkamp, U., Richter-Gebert, J., Strobel, M., 2016.
  {CindyJS}. In: International Congress on Mathematical Software. Springer, pp.
  319--326.

\bibitem[{von Gagern and Mercat(2010)}]{mercatgagern2010}
von Gagern, M., Mercat, C., 2010. A library of {OpenGL}-based mathematical
  image filters. In: International Congress on Mathematical Software. Springer,
  pp. 174--185.

\bibitem[{Vukicevic(2014)}]{bugzilla}
Vukicevic, V., 2014. {Bugzilla@Mozilla} [webcl] add opencl in gecko.
  \url{https://bugzilla.mozilla.org/show_bug.cgi?id=664147#c30}, accessed:
  2018-08-08.

\end{thebibliography}

\end{document}